\newtheorem{Theorem}{Theorem}[section]
\newtheorem{lem}[Theorem]{Lemma}
\newtheorem{Remark}[Theorem]{Remark}
\newtheorem{Definition}[Theorem]{Definition}
\newtheorem{Example}[Theorem]{Example}
\numberwithin{equation}{section}
\begin{document}
\title{{\LARGE Enumeration of extended irreducible binary Goppa codes}}

\author{Bocong Chen$^1$ and Guanghui~Zhang$^2$\footnote{E-mail addresses: bocongchen@foxmail.com(B. Chen); zghui2012@126.com (G. Zhang)}} 


\date{\small
$1.$ School of Mathematics, South China University of Technology, Guangzhou 510641, China\\
$2.$ School of Mathematical Sciences, Luoyang Normal University, Luoyang, Henan 471934, China
}


\maketitle

\begin{abstract}
The family of Goppa codes   is one of the most interesting subclasses of linear codes.
As  the McEliece cryptosystem often chooses a random Goppa code   as its key,
knowledge of the number of inequivalent Goppa codes for fixed parameters may facilitate in the evaluation
of the security of such a cryptosystem. In this paper we present a new approach to give an upper  bound on the number of
inequivalent extended irreducible binary Goppa codes. To be more specific,
let  $n>3$ be an odd prime number and
$q=2^n$; let $r\geq3$ be a positive integer satisfying $\gcd(r,n)=1$ and $\gcd\big(r,q(q^2-1)\big)=1$.
We obtain an upper bound for the number of inequivalent extended irreducible binary Goppa codes of length $q+1$ and degree $r$.

\medskip
\textbf{MSC:} 94B50.

\textbf{Keywords:} Binary Goppa codes, extended Goppa codes, inequivalent codes, group actions.

\end{abstract}

\section{Introduction}
The family of Goppa codes is one of the most interesting subclasses of linear codes, for example, see \cite{berger3,berger1,goppa1,goppa2}.
This family of codes also contains long codes that have good parameters.
Goppa codes have attracted considerable attention in cryptography; the McEliece cryptosystem
and the Niederreiter cryptosystem are examples of public-key cryptosystems that use  Goppa codes, for example, see \cite{ls}, \cite{mceliece}.
One of the   reasons why Goppa codes receive interest from cryptographers may be
that  Goppa codes have few invariants and the number of inequivalent codes grow exponentially
with the length and dimension of the code, which makes it possible to resist to any structural attack.
More specifically, in the McEliece cryptosystem a random Goppa code is often chosen as a key.
When we give the assessment of the security of this cryptosystem against the enumerative attack,
it is important for us to know the number of Goppa codes for any given set of parameters.
An enumerative attack in the McEliece cryptosystem is to find all Goppa codes for a given set of parameters and
to test their equivalences with the public codes \cite{ls}.
Thus one of the key issues for the McEliece cryptosystem is the enumeration of inequivalent Goppa codes for a given set of parameters.

There has been tremendous interest in developing the enumeration of inequivalent Goppa codes.
Making use of the invariant property under the group of transformations, Moreno \cite{moreno} classified cubic and quartic irreducible
Goppa codes and obtained that there are four inequivalent  quartic Goppa codes of length $33$;
in addition, Moreno proved that there is only one inequivalent extended irreducible binary Goppa code with any length and degree $3$.
Ryan  studied irreducible Goppa codes \cite{ryan14,ryan2}.
Ryan and Fitzpatrick \cite{rf} obtained an upper bound for the number of irreducible Goppa codes of length $q^n$ over $\mathbb{F}_q$.
Ryan \cite{ryan15} made
a great improvement on  giving a much tighter upper bound (compared to the previous work \cite{ryan14})
on the number of  inequivalent extended irreducible binary quartic Goppa codes of
length $2^n+1$, where $n>3$ is a prime number. Following that line of research,
an upper bound on the number of inequivalent extended irreducible
binary Goppa codes of degree $2^m$ and length $2^n+1$ was given in \cite{mmr}, where $n$ is an odd prime and $m>1$ is a positive integer.
Magamba and Ryan \cite{magamba} obtained an upper bound on the number of inequivalent extended irreducible $q$-ary Goppa codes
of degree $r$ and length $q^n+1$, where $q=p^t$, $n$ and $r>2$ are both prime numbers.
In \cite{musukwa}, an upper bound on the number of inequivalent extended irreducible binary Goppa codes of degree $2p$ and length
$2^n+1$ was produced, where $n$ and $p$ are two distinct odd primes such that $p$ does not divide $2^n\pm1$.
Very recently, Huang and Yue \cite{yueqin} obtained an upper bound on the number of extended irreducible binary Goppa codes of degree $6$ and length
$2^n+1$, where $n>3$ is a prime number.

In this paper, we explore further the ideas in \cite{yueqin} and \cite{ryan15} to give an
upper bound on the number of inequivalent extended irreducible binary Goppa codes of length
$2^n+1$ and degree $r$, where $n>3$ is an odd prime number and
$r\geq3$ be a positive integer satisfying $\gcd(r,n)=1$ and $\gcd\big(r,q(q^2-1)\big)=1$.
By virtue of a result in \cite{ryan15}, the number of inequivalent extended irreducible binary Goppa codes of length $2^n+1$
and degree $r$ is less than or equal to the number of orbits of ${\rm P\Gamma L}$ on $\mathcal{S}$,
where ${\rm P\Gamma L}={\rm PGL}\rtimes {\rm Gal}$
is the   projective semi-linear group and $\mathcal{S}$ is a subset of $\mathbb{F}_{2^{nr}}$. With the help of this result,
one only needs to count the number of orbits of ${\rm P\Gamma L}$ on $\mathcal{S}$.
The   papers \cite{yueqin},  \cite{magamba},  \cite{musukwa}, \cite{mmr} and \cite{ryan15}  used the Cauchy-Frobenius Theorem to  calculate   the number of orbits of
$\rm{P \Gamma L}$  on  $\mathcal{S}$.
Distinguishing from this approach, we do not use the Cauchy-Frobenius Theorem to obtain our main result;
we introduce an action of
${\rm P\Gamma L}$ on the set of all monic irreducible polynomials of degree $r$ over $\mathbb{F}_q$, say $\mathcal{I}_r$ (see Lemma \ref{action}).
We then show that the number of orbits of  ${\rm P\Gamma L}$ on $\mathcal{S}$ is equal to the number of orbits of ${\rm P\Gamma L}$ on $\mathcal{I}_r$ (see Lemma \ref{orbit}). It allows us to convert the problem of counting the number of orbits of  ${\rm P\Gamma L}$ on $\mathcal{S}$
to that of counting  the number of orbits of ${\rm P\Gamma L}$ on $\mathcal{I}_r$.
We then use a basic  strategy to count the number of orbits of ${\rm P\Gamma L}$ on $\mathcal{I}_r$ (see Lemma \ref{book}): the number of orbits of
${\rm P\Gamma L}$ on $\mathcal{I}_r$ is equal to the number of orbits of ${\rm Gal}$ on ${\rm PGL}\verb|\|\mathcal{I}_r$,
where ${\rm PGL}$ is the projective linear group and ${\rm PGL}\verb|\|\mathcal{I}_r$ is the set of orbits of ${\rm PGL}$ on $\mathcal{I}_r$.
One of the advantages of considering the action of ${\rm P\Gamma L}$ on $\mathcal{I}_r$ rather than that on   $\mathcal{S}$ is that the orbits of
$\langle \sigma^n\rangle$ on ${\rm PGL}\verb|\|\mathcal{I}_r$ are trivial to see, where $\sigma$ is the Frobenius generator of ${\rm Gal}$.

This paper is organized as follows. In Section $2$, we review some definitions and basic results about extended irreducible Goppa codes, some matrix groups and group actions.
In Section  $3$,  we find a formula for the number of orbits of ${\rm P \Gamma L}$ on the set $\mathcal{I}_r$, which naturally gives an upper bound for the number of inequivalent extended irreducible Goppa codes of length $2^n+1$ and degree $r$, where $n$ and $r$ satisfy certain conditions; we also give two small examples to illustrative our main result.
We conclude this paper with remarks and some possible future works in Section $4$.

\section{Preliminaries}
Starting  from this section till the end of this paper, we assume that  $n>3$ is an odd prime number and
$q=2^n$; let $r\geq3$ be a positive integer satisfying $\gcd(r,n)=1$ and $\gcd\big(r,q(q^2-1)\big)=1$.
Let $\mathbb{F}_q$ be the finite field with $q$ elements and let $\mathbb{F}_q^*=\mathbb{F}_q\setminus\{0\}$
be the multiplicative group of the finite field $\mathbb{F}_q$.  Suppose $x$ is an indeterminate over the
finite field $\mathbb{F}_q$ and let $\mathbb{F}_q[x]$ be the polynomial ring in   variable $x$  with coefficients in $\mathbb{F}_q$.
As usual, for a polynomial $f(x)\in \mathbb{F}_q[x]$ (or simply denoted by $f$), $\deg f$ is the degree of $f$;
for a finite set $X$, let $|X|$  denote the number of elements of $X$.

We start by recalling the notion of irreducible binary Goppa codes of length $q$.
For the general definition and more detail information about Goppa codes,   readers may refer to \cite{lx} or \cite{ms}.
\subsection{Extended irreducible Goppa codes}
\begin{Definition}
Let $g(x)$ be a polynomial in $\mathbb{F}_q[x]$ of degree $r$, and let
$L=\mathbb{F}_q=\{\alpha_0,\alpha_1,\cdots,\alpha_{q-1}\}$  such that
$L\cap \{\mbox{zeros of}~g(x)\}=\emptyset$.
The binary Goppa code $\Gamma(L,g)$ of length $q$ and degree $r$ is defined as
$$\Gamma(L,g)=\bigg\{c=(c_0,c_1,\cdots,c_{q-1})\in \mathbb{F}_2^q\,\Big|\,\sum_{i=0}^{q-1}\frac{c_i}{x-\alpha_i}\equiv 0\pmod{g(x)}\bigg\}.$$
The polynomial $g(x)$ is called the Goppa polynomial.
When $g(x)$ is irreducible, $\Gamma(L,g)$ is called an irreducible binary Goppa code of degree $r$.
\end{Definition}
The Goppa code of length $q$ can be extended to a code of length $q+1$ by appending a coordinate in the set $L=\mathbb{F}_q$.
In this paper, we mainly consider extended irreducible Goppa codes.
The definition of  extended irreducible Goppa codes of length $q+1$ and degree $r$ is given below.

\begin{Definition}
For a given monic irreducible polynomial $g(x)$ of degree $r$, let $\Gamma(L,g)$ be an irreducible binary Goppa code
of length $q$ as given above.
The extended Goppa code $\overline{\Gamma(L,g)}$ of length $q+1$ is defined as
$$\overline{\Gamma(L,g)}=\Big\{\big(c_0,c_1,\cdots,c_q\big)\in \mathbb{F}_2^{q+1}\,\Big|\,
\big(c_0,c_1,\cdots,c_{q-1}\big)\in \Gamma(L,g)~\hbox{and}~\sum\limits_{i=0}^qc_i=0\Big\}.$$
\end{Definition}
Chen \cite{Chen} showed that the irreducible binary Goppa code $\Gamma(L,g)$ is completely determined by any root of the Goppa polynomial $g(x)$;
more precisely, if $\alpha$ is a root of $g(x)$ in some extension field over $\mathbb{F}_q$, then
$$
H(\alpha)=\Big(\frac{1}{\alpha-\alpha_0},\frac{1}{\alpha-\alpha_1},\cdots,\frac{1}{\alpha-\alpha_{q-1}}\Big)
$$
can be served as a parity check matrix for $\Gamma(L,g)$.
As such, let $C(\alpha)$ denote the code $\Gamma(L,g)$ and let  $\overline{C(\alpha)}$ denote the code $\overline{\Gamma(L,g)}$.
Therefore, every extended irreducible binary Goppa code of length $q+1$ and degree $r$ can be described as $\overline{C(\alpha)}$
for some $\alpha\in \mathbb{F}_{q^r}$.

\subsection{Equivalent extended irreducible Goppa codes}
The primary purpose of this paper is to give an
upper bound for the number of (inequivalent)
extended irreducible binary   Goppa codes  of length $q+1$ and degree $r$.
This problem can be reduced to that of  counting
the number of orbits of the projective semi-linear group action on some set (see \cite{berger3}, \cite{yueqin} or \cite{ryan15}).
To state this result clearly, we need the notions of group actions and some matrix groups.
In the following we collect the matrix groups that we will
use later and fix the notations.

(1) The general linear group of size $2\times 2$ over $\mathbb{F}_q$
$${\rm GL}={\rm GL}_2(\mathbb{F}_q)=\bigg\{
\begin{pmatrix}
a & b \\
c & d
\end{pmatrix}\bigg|~a,b,c,d\in \mathbb{F}_q, ad-bc\neq 0\bigg\}.$$

(2) The affine general linear group of size $2\times 2$ over $\mathbb{F}_q$
$${\rm AGL}={\rm AGL}_2(\mathbb{F}_q)=\bigg\{
\begin{pmatrix}
a & b \\
0 & 1
\end{pmatrix}\bigg{|}~a\in \mathbb{F}_q^*, b\in \mathbb{F}_q\bigg\}.$$

(3) The projective general linear group of size $2\times 2$ over $\mathbb{F}_q$
$${\rm PGL}={\rm PGL}_2(\mathbb{F}_q)={\rm GL}/\mathcal{Z},$$
where $\mathcal{Z}$ is the center of ${\rm GL}$ consisting of the multiples of the identity matrix by elements of $\mathbb{F}_q^*$.

(4) The projective semi-linear group
$${\rm P\Gamma L}={\rm P\Gamma L}_2(\mathbb{F}_q)={\rm PGL}\rtimes {\rm Gal}=
\Big\{A\sigma^i\,\Big|\,A\in {\rm PGL}, 0\leq i\leq rn-1\Big\},$$
where ${\rm Gal}={\rm Gal}(\mathbb{F}_{q^r}/\mathbb{F}_2)={\rm Gal}(\mathbb{F}_{2^{rn}}/\mathbb{F}_2)=\langle\sigma \rangle$ is
the Galois group of order $rn$ generated by $\sigma$ ($\sigma$ sends each $\alpha\in\mathbb{F}_{q^r}$ to $\alpha^2$).
The operation $``\cdot"$ in ${\rm P\Gamma L}$ is defined as follows:
$$A\sigma^i\cdot B\sigma^j=A\sigma^i(B)\sigma^{i+j},~ 0\leq i,j\leq rn-1,$$
where $\sigma^i(B)=\begin{pmatrix}\sigma^it & \sigma^iu\\ \sigma^iv & \sigma^iw\end{pmatrix}$ for
$B=\begin{pmatrix}t & u\\ v & w\end{pmatrix}\in {\rm PGL}$ ($\sigma^ia$ means $\sigma^ia=a^{2^i}$ for $a\in\mathbb{ F}_{q^r}$).
It is clear that $E_2\sigma^0$ is the identity element of $P\Gamma L$, where $E_2=\left(
                                                                                    \begin{array}{cc}
                                                                                      1 & 0 \\
                                                                                      0 & 1 \\
                                                                                    \end{array}
                                                                                  \right)
$
is the identity matrix.

Now it is the turn of group  actions (for example, see \cite{Rotman}).
Let $\mathcal{S}=\mathcal{S}(r,n)$ denote the set of  elements in $\mathbb{F}_{q^r}$ of degree $r$ over $\mathbb{F}_q$; in other words,
$$\mathcal{S}=\Big\{\alpha\in \mathbb{F}_{q^r}\,\Big|\,\hbox{there exists a monic irreducible polynomial of degree $r$ over $\mathbb{F}_q$ satisfying $f(\alpha)=0$}\Big\}.$$
It is   known that ${\rm PGL}$ and ${\rm P\Gamma L}$ can act on the set $\mathcal{S}$ in the following ways (see \cite{yueqin} or \cite{ryan15}):
\begin{itemize}

\item The action of the projective general linear group on $\mathcal{S}$:
\begin{eqnarray*}
{\rm PGL}\times \mathcal{S} &\rightarrow & \mathcal{S}\\
(A,\alpha) &\mapsto & A\alpha=\frac{a\alpha+b}{c\alpha+d},
\end{eqnarray*}
where $A=\begin{pmatrix}a & b \\ c & d\end{pmatrix}\in {\rm PGL}$.

\item The action of the projective semi-linear group on $\mathcal{S}$:
\begin{eqnarray*}
{\rm P\Gamma L}\times \mathcal{S} &\rightarrow & \mathcal{S}\\
\Big( A\sigma^i,\alpha\Big) &\mapsto & A\sigma^i\alpha=A\big(\sigma^i(\alpha)\big)=\frac{a\sigma^i(\alpha)+b}{c\sigma^i(\alpha)+d}
=\frac{a\alpha^{2^{i}}+b}{c\alpha^{2^{i}}+d}.
\end{eqnarray*}
\end{itemize}
We are ready to recall a sufficient condition
which guarantees two extended irreducible Goppa
codes to be equivalent; thus, in particular, it gives an upper bound for the number of inequivalent codes in
$$
\Big\{\overline{C(\alpha)}\,\Big|\,\alpha\in \mathcal{S}\Big\},
$$
see \cite{berger3}, \cite{yueqin} or \cite{ryan15}.
\begin{lem}\label{important}
Let $\alpha\in \mathcal{S}$ and $\beta\in \mathcal{S}$. If $\alpha,\beta$ lie in the same ${\rm P\Gamma L}$-orbit: $\alpha=A\sigma^i\beta$ for some
$A\sigma^i\in {\rm P\Gamma L}$, then the extended Goppa code $\overline{C(\alpha)}$ is (permutation) equivalent to the extended Goppa code
$\overline{C(\beta)}$.
In particular, the number of inequivalent extended irreducible binary Goppa codes of length $q+1$
and degree $r$ is less than or equal to the number of orbits of ${\rm P\Gamma L}$ on $\mathcal{S}$.
\end{lem}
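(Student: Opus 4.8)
The statement has two parts: (i) if $\alpha = A\sigma^i\beta$ for some $A\sigma^i \in {\rm P\Gamma L}$, then $\overline{C(\alpha)}$ is permutation-equivalent to $\overline{C(\beta)}$; (ii) consequently the number of inequivalent extended irreducible codes is at most the number of ${\rm P\Gamma L}$-orbits on $\mathcal S$. Part (ii) is a purely formal consequence of part (i): since every extended irreducible binary Goppa code of length $q+1$ and degree $r$ is of the form $\overline{C(\alpha)}$ for some $\alpha \in \mathcal S$ (noted right after the parity-check-matrix discussion of Chen \cite{Chen}), and since by (i) codes coming from $\alpha$'s in the same orbit are equivalent, the map sending an orbit to the equivalence class of $\overline{C(\alpha)}$ for any representative $\alpha$ is well-defined and surjective onto the set of equivalence classes. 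Hence the number of equivalence classes is $\le$ the number of orbits. So the whole content is in (i).

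For part (i), the plan is to split the generator $A\sigma^i$ into its two factors, the Frobenius part $\sigma^i$ and the ${\rm PGL}$ part $A$, and handle each separately, using the parity-check matrix $H(\alpha) = \bigl(\tfrac{1}{\alpha-\alpha_0},\dots,\tfrac{1}{\alpha-\alpha_{q-1}}\bigr)$. First I would treat the Frobenius: if $\beta' = \sigma^i(\beta) = \beta^{2^i}$, then $\tfrac{1}{\beta'-\alpha_j} = \tfrac{1}{\beta^{2^i}-\alpha_j}$; since the coordinates $\alpha_j$ range over all of $\mathbb F_q$ and $a\mapsto a^{2^i}$ is an automorphism of $\mathbb F_q$, raising the whole identity $\sum_j \tfrac{c_j}{\beta-\alpha_j}\equiv 0$ (an identity in $\mathbb F_q(\beta)$, with the $c_j\in\mathbb F_2$ fixed by Frobenius) to the $2^i$-th power shows that $\sum_j \tfrac{c_j}{\beta^{2^i}-\alpha_j^{2^i}}=0$, i.e. after the coordinate permutation $\alpha_j \mapsto \alpha_j^{2^i}$ of $L$, the code $C(\beta)$ becomes $C(\beta^{2^i})$. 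Thus $C(\sigma^i\beta)$ is a coordinate-permutation of $C(\beta)$, and appending the overall parity check is preserved by any coordinate permutation, so $\overline{C(\sigma^i\beta)}$ is permutation-equivalent to $\overline{C(\beta)}$.

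Next I would treat the ${\rm PGL}$ action. Write $\gamma = \sigma^i(\beta)$ and $\alpha = A\gamma = \tfrac{a\gamma+b}{c\gamma+d}$. The key computation is to rewrite $\tfrac{1}{\alpha - \alpha_j}$ in terms of $\tfrac{1}{\gamma - \alpha'_j}$ for a suitable relabelling. One finds
$$
\frac{1}{\alpha - \alpha_j} = \frac{1}{\frac{a\gamma+b}{c\gamma+d} - \alpha_j} = \frac{c\gamma+d}{(a - c\alpha_j)\gamma + (b - d\alpha_j)} = \frac{1}{a - c\alpha_j}\cdot\frac{c\gamma + d}{\gamma - \frac{d\alpha_j - b}{a - c\alpha_j}},
$$
so, setting $\alpha'_j = \dfrac{d\alpha_j - b}{a - c\alpha_j} = A^{-1}\alpha_j$ (the action of the inverse fractional-linear map on the field element $\alpha_j$), the column $\tfrac{1}{\alpha-\alpha_j}$ equals a nonzero scalar $\lambda_j = \tfrac{1}{a-c\alpha_j}$ times $\tfrac{c\gamma+d}{\gamma - \alpha'_j}$; and $\tfrac{c\gamma+d}{\gamma-\alpha'_j} = \tfrac{1}{\gamma-\alpha'_j}\cdot(c\gamma+d)$ where the factor $c\gamma+d$ is independent of $j$, hence may be absorbed into the row operations defining the code (it does not change the solution space of $H(\alpha)c^{\top}=0$). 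Here I must check the standard nondegeneracy points: $a - c\alpha_j \neq 0$ for all $j$ — equivalently $A^{-1}$ does not send any $\alpha_j\in\mathbb F_q$ to $\infty$; if it does (this happens for at most one value of $j$, corresponding to $\alpha_j = a/c$ when $c\ne 0$), that coordinate is exactly the extra coordinate $``\infty"$ of the extended code, and this is precisely where the extension becomes essential — the fractional-linear map permutes $\mathbb F_q \cup \{\infty\}$, not $\mathbb F_q$, so the point at infinity and one point of $\mathbb F_q$ get swapped, and the extended construction is set up exactly so that $\overline{C(\alpha)}$ is the (invariant under such a swap) punctured/extended pair. So the upshot: $H(\alpha)$ is obtained from $H(A^{-1}\cdot$ a permutation of $L\cup\{\infty\})$ applied to $\gamma$ by multiplying columns by nonzero scalars and multiplying the whole thing by $c\gamma+d$, neither of which changes the code, hence $\overline{C(\alpha)} = \overline{C(A\gamma)}$ is permutation-equivalent to $\overline{C(\gamma)}$.

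The main obstacle is the careful bookkeeping of the point at infinity: making precise that the fractional-linear group ${\rm PGL}_2(\mathbb F_q)$ acts on $\mathbb F_q \cup \{\infty\}$ and that the extended code $\overline{C(\alpha)}$ — with its extra ``$\infty$'' coordinate and overall parity check — is exactly the object on which this enlarged permutation action is realized, so that the scalar-column and common-factor manipulations above genuinely descend to a coordinate permutation of the length-$(q+1)$ code (and not merely a monomial equivalence). Once that identification is set up cleanly, combining it with the Frobenius step of the previous paragraph gives the full claim for an arbitrary generator $A\sigma^i$, and part (ii) then follows formally as described. I would remark that this lemma is essentially due to \cite{berger3}, \cite{ryan15}, and I would cite it rather than reprove it in full if the paper's focus is the enumeration rather than the equivalence criterion.
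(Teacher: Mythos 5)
First, a remark on context: the paper does not prove this lemma at all --- it is quoted from \cite{berger3}, \cite{yueqin} and \cite{ryan15} --- so your closing suggestion (cite rather than reprove) is exactly what the authors do. Your reduction of part (ii) to part (i) is correct, and your treatment of the Frobenius factor $\sigma^i$ is correct: raising the defining identity to the $2^i$-th power and relabelling $L$ by the field automorphism $a\mapsto a^{2^i}$ gives a genuine coordinate permutation.

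The gap is in your treatment of the ${\rm PGL}$ factor, at its central step. You write the $j$-th column of $H(\alpha)$ as $\lambda_j\cdot\frac{c\gamma+d}{\gamma-A^{-1}\alpha_j}$ with $\lambda_j=\frac{1}{a-c\alpha_j}$ and assert that multiplying columns by these nonzero scalars does not change the code. For a \emph{binary} Goppa code this is false: the code is the $\mathbb{F}_2$-kernel of a matrix over $\mathbb{F}_{q^r}$, and since the $c_j$ are constrained to lie in $\mathbb{F}_2$, the condition $\sum_j c_j\lambda_j h_j=0$ is not equivalent to $\sum_j c_j h_j=0$ when the $\lambda_j\in\mathbb{F}_q^*$ genuinely depend on $j$ --- which they do whenever $c\neq0$. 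Only a scalar common to all columns cancels; that is precisely why the affine case $c=0$ is easy and the general case is not, and your parenthetical ``not merely a monomial equivalence'' names the problem without resolving it. The missing idea is the partial-fraction (logarithmic-derivative) identity
$$\frac{1}{\alpha-\alpha_j}\;=\;\frac{(c\gamma+d)^2}{\det A}\cdot\frac{1}{\gamma-A^{-1}\alpha_j}\;+\;\frac{c(c\gamma+d)}{\det A},$$
which expresses each column as a scalar \emph{independent of $j$} times the column at the permuted position, \emph{plus a constant independent of $j$}. Summing against a binary word, the common scalar cancels and the constant contributes $\frac{c(c\gamma+d)}{\det A}\sum_j c_j$; since $\frac{c(c\gamma+d)}{\det A}=\frac{(c\gamma+d)^2}{\det A}\cdot\frac{1}{\gamma-d/c}$ and $d/c=A^{-1}\infty$, this is exactly the column at the image of the point at infinity, and it is absorbed by the parity-check coordinate of the extended code. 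Carrying out this bookkeeping (together with the one index $j$ for which $a-c\alpha_j=0$, whose image becomes the extended coordinate) is the real content of the lemma; with the per-column-scalar decomposition as you wrote it, the argument does not close.
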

With the help of Lemma \ref{important},   we only need to count the number of orbits of ${\rm P\Gamma L}$ on $\mathcal{S}$.
\section{An upper bound for the number of extended Goppa codes}
The   papers \cite{yueqin},  \cite{magamba}, \cite{musukwa}, \cite{mmr} and \cite{ryan15} used the Cauchy-Frobenius Theorem to calculate  the number of orbits of
$\rm{P \Gamma L}$  on  $\mathcal{S}$.
Here we introduce another group action: The group $\rm{P \Gamma L}$ can act on the set of all
monic irreducible polynomials of degree $r$ over $\mathbb{F}_q$. Let $\mathcal{I}_r$
be the set of all monic irreducible polynomials of degree $r$ over  $\mathbb{F}_q$.
We will show that the number of orbits of $\rm{P \Gamma L}$ on $\mathcal{S}$ is equal to the number of orbits of
$\rm{P \Gamma L}$ on $\mathcal{I}_r$.

\subsection{The action of ${\rm P\Gamma L}$ on $\mathcal{I}_r$}
Let $A=\begin{pmatrix}a & b \\ c & d\end{pmatrix}\in {\rm PGL}$, $\alpha \in \mathbb{F}_q$
and $f(x)=a_0+a_1x+\cdots+a_rx^r\in \mathbb{F}_q[x]$, where $r\geq 1, a_r\neq 0$. We make the following definitions:
\begin{eqnarray*}
\big(f(x)\big)^*&=&\frac{1}{a_r}f(x),~~
A\alpha=\frac{a\alpha+b}{c\alpha+d},\\
Af&=&(-cx+a)^{r}f\big(A^{-1}x\big)=(-cx+a)^{r}f\Big(\frac{dx-b}{-cx+a}\Big),\\
\sigma^if&=&\sigma^i(f(x))=\sigma^ia_0+\sigma^ia_1x+\cdots+\sigma^ia_rx^r.
\end{eqnarray*}

Now we show that the group $\rm{P\Gamma L}$ acts on the set $\mathcal{I}_r$, as stated below.
\begin{lem}\label{action}
With the notation given above, we have a group action  ${\rm P\Gamma L}$ on the set $\mathcal{I}_r$ defined by
\begin{eqnarray*}
{\rm P\Gamma L}\times \mathcal{I}_r &\rightarrow & \mathcal{I}_r\\
\big(A\sigma^i,f\big) &\mapsto &(A\sigma^i)\big(f\big)=\Big(A(\sigma^if)\Big)^*.
\end{eqnarray*}
\end{lem}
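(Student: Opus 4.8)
The plan is to verify the three defining axioms of a group action directly: that the map is well-defined (i.e.\ $(A\sigma^i)(f)$ really lies in $\mathcal{I}_r$ whenever $f\in\mathcal{I}_r$), that the identity $E_2\sigma^0$ acts trivially, and that the associativity law $(A\sigma^i\cdot B\sigma^j)(f)=(A\sigma^i)\bigl((B\sigma^j)(f)\bigr)$ holds, where the product on the left is the composition in ${\rm P\Gamma L}$ defined in the Preliminaries.

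First I would handle well-definedness. Since $\sigma^i$ permutes $\mathbb{F}_q$ coefficientwise, $\sigma^i f$ is again a polynomial of degree $r$, and it is irreducible precisely because $\sigma^i$ is a field automorphism (it carries a factorization of $\sigma^i f$ back to one of $f$). Next, the operation $f\mapsto Af=(-cx+a)^r f\bigl(\tfrac{dx-b}{-cx+a}\bigr)$ is the classical ``substitution by a Möbius transformation and clear denominators'' construction: I would check that $Af$ is a genuine polynomial of degree exactly $r$ — the degree-$r$ coefficient of $Af$ is a nonzero multiple of $f(-d/c)$ when $c\neq 0$ (nonzero since $-d/c$ is in $\mathbb{F}_q$ while $f$ is irreducible of degree $r\geq 3$), and when $c=0$ it is simply $a^r a_r\neq 0$. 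Irreducibility of $Af$ follows because if $\alpha$ is a root of $f$ in $\mathbb{F}_{q^r}$, then $A^{-1}\alpha$ is a root of $Af$ and lies in $\mathbb{F}_{q^r}$ but not in any proper subfield (a Möbius transformation with coefficients in $\mathbb{F}_q$ cannot move an element of degree $r$ into a subfield), so $Af$ has an irreducible factor of degree $r$ and hence equals (a scalar times) that factor. Finally the normalization $(\ \cdot\ )^*$ makes the result monic, so $(A\sigma^i)(f)\in\mathcal{I}_r$.

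The identity axiom is immediate: $\sigma^0 f=f$, $E_2 f=(0\cdot x+1)^r f(x)=f(x)$, and $(f)^*=f$ since $f$ is already monic. The real work — and the step I expect to be the main obstacle — is associativity, because two separate phenomena must be reconciled: the twisting of $B$ by $\sigma^i$ that appears in the group law $A\sigma^i\cdot B\sigma^j=A\sigma^i(B)\sigma^{i+j}$, and the interaction of the matrix substitution $A\mapsto Af$ with both the Frobenius $\sigma^i$ and the normalization $*$. I would proceed in stages. First, note that $*$ is harmless: $(Af)^*=A(f^*)^*$ up to the scalar, and since every axiom is checked up to scalars one can carry an unnormalized ``$Af$'' throughout and apply $*$ only at the end; more precisely I would verify $\bigl((A g)^*\bigr)$ depends only on $g$ up to scalar and that scalars are absorbed, so WLOG drop $*$. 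Second, I would prove the two commutation rules $\sigma^i(Af)=\bigl(\sigma^i(A)\bigr)(\sigma^i f)$ — this is just applying the field automorphism $\sigma^i$ to every coefficient in the definition of $Af$ — and the multiplicativity $A(Bf)=(AB)f$ up to a nonzero scalar, which is the standard fact that substituting the Möbius map for $B$ and then for $A$ equals substituting the composite $AB$; here one checks $(-cx+a)^r$-type prefactors combine correctly, and the potential denominators cancel because we are composing fractional-linear maps. Third, I would assemble these:
\[
(A\sigma^i)\bigl((B\sigma^j)(f)\bigr)=A\bigl(\sigma^i(B(\sigma^j f))\bigr)=A\bigl(\sigma^i(B)\,(\sigma^i\sigma^j f)\bigr)=\bigl(A\,\sigma^i(B)\bigr)(\sigma^{i+j}f)=(A\sigma^i\cdot B\sigma^j)(f),
\]
all equalities up to nonzero scalars in $\mathbb{F}_q^*$, which disappear after applying $*$. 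The only delicate points are bookkeeping the scalar prefactors $(-cx+a)^r$ through the composition $A(Bf)=(AB)f$ and confirming no accidental degree drop occurs (guaranteed by irreducibility of $f$ and $r\geq 3$, exactly as in the well-definedness step). Once these lemmas are in hand the verification is a routine chain of substitutions.
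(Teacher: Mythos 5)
Your proposal is correct and verifies the same three axioms as the paper, but two of your sub-arguments take a genuinely different route. For irreducibility of $Af$ the paper invokes the criterion (citing Rotman) that a separable polynomial is irreducible iff the Galois group of its splitting field permutes its roots transitively, and checks that the roots of $A(\sigma^i f)$ form a single Frobenius orbit of $A(\sigma^i\alpha)$ with splitting field $\mathbb{F}_{q^r}$; you instead observe that a root of $Af$ is an $\mathbb{F}_q$-M\"obius image of a degree-$r$ element, hence again of degree $r$ (since $\mathbb{F}_q(A\alpha)=\mathbb{F}_q(\alpha)$), so the degree-$r$ polynomial $Af$ must be a scalar times the minimal polynomial of that root. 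Your argument is shorter and avoids the external citation. For the compatibility axiom the paper expands $(A\sigma^i)\big[(B\sigma^j)f\big]$ and $\big[(A\sigma^i)\cdot(B\sigma^j)\big]f$ in full and compares the two rational expressions; you factor the computation into the two commutation rules $\sigma^i(Af)=\big(\sigma^i(A)\big)(\sigma^i f)$ and $A(Bf)=(AB)f$ up to scalars, which is cleaner and makes transparent why the twisted product $A\sigma^i(B)\sigma^{i+j}$ is exactly what is needed; the scalar bookkeeping you flag does work out, since choosing different $\mathrm{GL}$-representatives only perturbs the prefactor $(-cx+a)^r$ by an element of $\mathbb{F}_q^*$, which the normalization absorbs. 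Both proofs treat $*$ identically, by passing to the action on not-necessarily-monic irreducibles modulo $\mathbb{F}_q^*$-scalars, exactly the paper's $\mathcal{X}_r/\!\sim$ device. One slip to correct: the roots of $Af=(-cx+a)^r f(A^{-1}x)$ are the images $A\alpha$ of the roots $\alpha$ of $f$, not $A^{-1}\alpha$ as you wrote; this is immaterial to your conclusion, since $A^{-1}$ is also an $\mathbb{F}_q$-M\"obius map and preserves the degree of $\alpha$, but the statement as written is inaccurate.
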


\begin{proof}
Verification of the group action conditions is routine.
We first introduce the strategy: Let $\mathcal{X}_r$ denote the set of all irreducible polynomials of degree $r$ over $\mathbb{F}_q$.
There is an equivalence relation defined on $\mathcal{X}_r$: $f\sim g$ if and only if there exists $\lambda\in \mathbb{F}_q^*$
such that $f=\lambda g$. We claim that
\begin{eqnarray*}
{\rm P\Gamma L}\times \mathcal{X}_r &\rightarrow & \mathcal{X}_r\\
\big(A\sigma^i,f\big) &\mapsto &(A\sigma^i)\big(f\big)=A(\sigma^if)
\end{eqnarray*}
defines an action of ${\rm P\Gamma L}$ on $\mathcal{X}_r$.
Once this claim is established, one can easily show that if $f\sim g$ then $A\sigma^if\sim A\sigma^ig$; this implies that
${\rm P\Gamma L}$ acts on the equivalence classes $\mathcal{X}_r/\sim$. Since every equivalence class contains a unique monic polynomial, it is trivial to see that ${\rm P\Gamma L}$ acts on $\mathcal{I}_r$ in the way stated in the lemma. Therefore, it is enough to verify that
${\rm P\Gamma L}$ acts on $\mathcal{X}_r$.

We will give a detail proof  by carrying out the following steps, although it is somewhat tedious.

Step $1.$ ${\rm deg}\big(A(\sigma^if)\big)={\rm deg}\big( f\big)$ for any $f\in \mathcal{X}_r$.

Suppose that $f(x)=f_0+f_1x+\cdots+f_{r-1}x^{r-1}+f_rx^r$, where $f_r\neq0$. Then
\begin{eqnarray*}
A\big((\sigma^if)(x)\big) & = &  (\sigma^if)\Big(\frac{dx-b}{-cx+a}\Big)\cdot (-cx+a)^r\\
&=&\sigma^if_{r}(dx-b)^r+\sigma^if_{n-1}(dx-b)^{r-1}(-cx+a)+\cdots+\sigma^if_0(-cx+a)^r\\
&=& x^r\big[\sigma^if_{r}d^r+\sigma^if_{r-1}(-c)d^{r-1}+\cdots+\sigma^if_0(-c)^r\big]+\cdots.
\end{eqnarray*}

If $c\neq 0$, then
\begin{eqnarray*}
&&\sigma^if_{r}d^r+\sigma^if_{r-1}(-c)d^{r-1}+\cdots+\sigma^if_0(-c)^r\\
&=& (-c)^r\Big[\sigma^if_0+\sigma^if_1\Big(-\frac{d}{c}\Big)+\cdots+
\sigma^if_{r-1}\Big(-\frac{d}{c}\Big)^{r-1}+\sigma^if_{r}\Big(-\frac{d}{c}\Big)^{r}\Big]\\
&=& (-c)^r(\sigma^if)\Big(-\frac{d}{c}\Big).
\end{eqnarray*}
Noting that $c,d\in \mathbb{F}_q$ and $\sigma^if$ is irreducible over $\mathbb{F}_q$, we obtain
$(-c)^r(\sigma^if)\big(-\frac{d}{c}\big)\neq 0$.

If $c=0$, then $A\in {\rm PGL}$, which yields $d\neq 0$, and we have
$\sigma^if_rd^r\neq0$, i.e., the  leading coefficient of $x^r$ in $(A\sigma^i)f$ is nonzero.

In conclusion, we have
${\rm deg}\big(A(\sigma^if)\big)={\rm deg}\big( f\big)$ for any $f(x)\in \mathcal{X}_r$.

Step $2.$ $A(\sigma^if)\in \mathcal{X}_r$ for any $f\in \mathcal{X}_r$.

It is enough to show that $A(\sigma^if)$ is irreducible over $\mathbb{F}_q$.
To this end, we use the following result (see \cite[Proposition 4.13]{Rotman}):
Let $k$ be a field and let $p(x)\in k[x]$ have no repeated roots.
If $E/k$ is a splitting field of $p(x)$, then $p(x)$ is irreducible if and only if the Galois group of $E$ over $k$, denoted by ${\rm Gal}(E/k)$, acts transitively on the roots of $p(x)$.

Suppose that $\alpha,\alpha^q,\cdots,\alpha^{q^{r-1}}$ are all the distinct roots of $f(x)$.
Then $A(\sigma^i\alpha),A(\sigma^i\alpha^q),\cdots,A(\sigma^i\alpha^{q^{r-1}})$ are   all the roots of $A(\sigma^if)$.
Since for $j=0,1,2,\cdots,r-1$,
$$A(\sigma^i\alpha^{q^{j}})=\frac{a\sigma^i\alpha^{q^{j}}+b}{c\sigma^i\alpha^{q^{j}}+d}
=\Big(\frac{a\sigma^i\alpha+b}{c\sigma^i\alpha+d}\Big)^{q^{j}}=\big(A(\sigma^i\alpha)\big)^{q^j},$$
$A(\sigma^i\alpha),A(\sigma^i\alpha^q),\cdots,A(\sigma^i\alpha^{q^{r-1}})$ are distinct; that is to say
$A(\sigma^if)$ has no repeated roots.
On the other hand, $A(\sigma^if)\in \mathbb{F}_q[x]$ and the splitting field of $A(\sigma^if)$
is $\mathbb{F}_q\big(A(\sigma^i\alpha)\big)$.

Write $\beta=A(\sigma^i\alpha)$. Then $\alpha=\sigma^{rn-i}(A^{-1}\beta)$. So
$$\mathbb{F}_q\big(A(\sigma^i\alpha)\big)=\mathbb{F}_q(\beta)=\mathbb{F}_q(\alpha)=\mathbb{F}_{q^r}.$$
Hence $\mathbb{F}_{q^r}$ is the splitting field of $A(\sigma^if)$.

Clearly, ${\rm Gal}(\mathbb{F}_{q^r}/\mathbb{F}_q)=\langle \sigma^n\rangle$. Let $\tau=\sigma^n$ and then all the distinct roots of $A(\sigma^if)$ are
$$\beta, \tau(\beta), \tau^2(\beta),\cdots,\tau^{r-1}(\beta).$$
Thus ${\rm Gal}(\mathbb{F}_{q^r}/\mathbb{F}_q)$ acts transitively on the roots of $A(\sigma^if)$.
Hence according to \cite[Proposition 4.13]{Rotman} $A(\sigma^if)$ is irreducible over $\mathbb{F}_q$.

Step $3.$ Clearly, $(E_2\sigma^0)f=E_2(\sigma^0f)=f$, where $E_2\sigma^0$ is the identity element of the group $P\Gamma L$.

Step $4.$ We are left to check that
$$(A\sigma^i)\big[(B\sigma^j)f\big]=\big[(A\sigma^i)\cdot(B\sigma^j)\big]f,$$
where $A=\begin{pmatrix}a & b \\ c & d\end{pmatrix}\in {\rm PGL}$, $B=\begin{pmatrix}t & u\\ v & w\end{pmatrix}\in {\rm PGL}$.
On the one hand,
\begin{eqnarray*}
(A\sigma^i)\big[(B\sigma^j)f\big]
&=&(A\sigma^i)\Big[(-vx+t)^r(\sigma^jf)\Big(\frac{wx-u}{-vx+t}\Big)\Big]\\
&=& (-cx+a)^r\Big(-\sigma^iv\frac{dx-b}{-cx+a}+\sigma^it\Big)^r(\sigma^{i+j}f)
\bigg(\frac{\sigma^iw\frac{dx-b}{-cx+a}-\sigma^iu}{-\sigma^iv\frac{dx-b}{-cx+a}+\sigma^it}\bigg)\\
&=&\big[-\sigma^iv(dx-b)+\sigma^it(-cx+a)\big]^r(\sigma^{i+j}f)\bigg(\frac{\sigma^iw(dx-b)-\sigma^iu(-cx+a)}{-\sigma^iv(dx-b)+\sigma^it(-cx+a)}\bigg).
\end{eqnarray*}

On the other hand, since
\begin{eqnarray*}
(A\sigma^i)\cdot(B\sigma^j)
&=&A\sigma^i(B)\sigma^{i+j}\\
&=&\begin{pmatrix}a & b \\ c & d\end{pmatrix}\begin{pmatrix}\sigma^it & \sigma^iu \\ \sigma^iv & \sigma^iw\end{pmatrix}\sigma^{i+j}\\
&=&\begin{pmatrix}a\sigma^it+b\sigma^iv & a\sigma^iu+b\sigma^iw \\ c\sigma^it+d\sigma^iv & c\sigma^iu+d\sigma^iw\end{pmatrix}\sigma^{i+j}
\end{eqnarray*}
and
$$\big(A\sigma^i(B)\big)^{-1}=\begin{pmatrix}c\sigma^iu+d\sigma^iw & -a\sigma^iu-b\sigma^iw\\
-c\sigma^it-d\sigma^iv & a\sigma^it+b\sigma^iv\end{pmatrix},$$
we have
\begin{eqnarray*}
\big[(A\sigma^i)\cdot(B\sigma^j)\big]f
&=&\bigg(\begin{pmatrix}a\sigma^it+b\sigma^iv & a\sigma^iu+b\sigma^iw \\ c\sigma^it+d\sigma^iv & c\sigma^iu+d\sigma^iw\end{pmatrix} \sigma^{i+j}\bigg)f\\
&=& \big(-(c\sigma^it+d\sigma^iv)x+a\sigma^it+b\sigma^iv\big)^r\cdot(\sigma^{i+j}f)
\bigg(\frac{(c\sigma^iu+d\sigma^iw)x-(a\sigma^iu+b\sigma^iw)}{-(c\sigma^it+d\sigma^iv)x+a\sigma^it+b\sigma^iv}\bigg).
\end{eqnarray*}
Then we obtain  $(A\sigma^i)\big[(B\sigma^j)f\big]=\big[(A\sigma^i)\cdot(B\sigma^j)\big]f$,
as wanted.
\end{proof}
\begin{Remark}{\rm
Many authors have studied the action of ${\rm PGL}$ on $\mathcal{I}_r$, focusing on the characterization
and number of $A$-invariants where $A\in{\rm PGL}$ (for example, see \cite{Gare}, \cite{Reis18}, \cite{Reis182}, \cite{Reis20}, \cite{ST}).
The paper \cite{MOR} considered an action of ${\rm P\Gamma L}$ on $\mathcal{I}_r$, and our definition of ${\rm P\Gamma L}$ on $\mathcal{I}_r$ is different from  that of \cite{MOR}}.
\end{Remark}

\subsection{The orbits of ${\rm P\Gamma L}$ on $\mathcal{I}_r$}
In this subsection we analyze the orbits of ${\rm P\Gamma L}$ on $\mathcal{I}_r$. We first introduce some notations.
For a general group $G$ acting on a set $X$, let $G(x)$ denote the orbit containing $x\in X$, namely $G(x)=\{gx\,|\,g\in G\}$; let
${\rm Stab}_G(x)$ be the stabilizer of the point $x\in X$ in $G$, namely  $\rm{{Stab}}$$_G(x)=\{g\in G\,|\,gx=x\}$.
For example, ${\rm PGL}(\alpha)=\big\{  A\alpha\,\big|\,  A\in {\rm PGL}\big\}$   denotes the orbit  of $\alpha\in \mathcal{S}$ under the action of
${\rm PGL}$ on $\mathcal{S}$, and
${\rm P\Gamma L}(f)=\big\{(A\sigma^i)\big(f\big)\,\big|\,A\in {\rm PGL}, 0\leq i\leq rn-1\big\}$
denotes the orbit  of $f\in \mathcal{I}_r$ under the action of ${\rm P\Gamma L}$ on $\mathcal{I}_r$.

The next result reveals that the problem of counting the number of orbits of  ${\rm P\Gamma L}$ on $\mathcal{S}$ can be completely converted
to the problem of counting  the number of orbits of ${\rm P\Gamma L}$ on $\mathcal{I}_r$.
\begin{lem}\label{orbit}
The number of orbits of  ${\rm P\Gamma L}$ on $\mathcal{S}$ is equal to the number of orbits of ${\rm P\Gamma L}$ on $\mathcal{I}_r$.
\end{lem}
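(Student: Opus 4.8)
The plan is to set up an explicit correspondence between ${\rm P\Gamma L}$-orbits on $\mathcal{S}$ and ${\rm P\Gamma L}$-orbits on $\mathcal{I}_r$ by sending an element $\alpha\in\mathcal{S}$ to its minimal polynomial ${\rm Irr}(\alpha,\mathbb{F}_q)\in\mathcal{I}_r$. Since every $\alpha\in\mathcal{S}$ has degree exactly $r$ over $\mathbb{F}_q$, this minimal polynomial is a well-defined monic irreducible polynomial of degree $r$, so the map $\Phi\colon\mathcal{S}\to\mathcal{I}_r$, $\alpha\mapsto{\rm Irr}(\alpha,\mathbb{F}_q)$, makes sense. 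It is surjective because any $f\in\mathcal{I}_r$ has a root in $\mathbb{F}_{q^r}$, and that root lies in $\mathcal{S}$. The two fibres $\Phi^{-1}(f)$ are exactly the sets of roots of $f$, each of size $r$, permuted cyclically by $\tau=\sigma^n$.

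The heart of the argument is to check that $\Phi$ is equivariant, i.e. that $\Phi(A\sigma^i\alpha)={\rm P\Gamma L}$-translate $(A\sigma^i)(\Phi(\alpha))$ under the two actions defined above. First I would observe that $\sigma^i(\alpha)$ has minimal polynomial $\sigma^i\big({\rm Irr}(\alpha,\mathbb{F}_q)\big)$ over $\mathbb{F}_q$ (applying the field automorphism $\sigma^i$ — note $\sigma^i$ need not fix $\mathbb{F}_q$, but it maps $\mathbb{F}_q$ isomorphically, and one checks directly that $\sigma^i f$ is irreducible of degree $r$ with root $\sigma^i\alpha$ exactly as in Step 2 of Lemma \ref{action}). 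Then, writing $\beta=\sigma^i(\alpha)$ with minimal polynomial $g=\sigma^i f$, I would verify that the matrix action $A\beta=\frac{a\beta+b}{c\beta+d}$ produces an element whose minimal polynomial is $\big(A g\big)^*$: indeed $(Ag)(x)=(-cx+a)^r g\big(\frac{dx-b}{-cx+a}\big)$ vanishes at $x=A\beta$ because $A^{-1}(A\beta)=\beta$ is a root of $g$, and by Lemma \ref{action} (Steps 1 and 2) $(Ag)^*$ is monic irreducible of degree $r$; having a common root of degree $r$, it must equal ${\rm Irr}(A\beta,\mathbb{F}_q)$. Chaining these gives $\Phi(A\sigma^i\alpha)=(Ag)^*=(A\sigma^i)(f)=(A\sigma^i)(\Phi(\alpha))$, which is precisely equivariance.

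With equivariance in hand the conclusion is formal. An equivariant surjection $\Phi\colon\mathcal{S}\to\mathcal{I}_r$ induces a surjection on orbit sets ${\rm P\Gamma L}\backslash\mathcal{S}\twoheadrightarrow{\rm P\Gamma L}\backslash\mathcal{I}_r$; it remains to see this induced map is injective, i.e. that if $\alpha,\alpha'\in\mathcal{S}$ have $\Phi(\alpha)$ and $\Phi(\alpha')$ in the same ${\rm P\Gamma L}$-orbit on $\mathcal{I}_r$, then $\alpha,\alpha'$ are in the same ${\rm P\Gamma L}$-orbit on $\mathcal{S}$. This reduces to the single case $\Phi(\alpha)=\Phi(\alpha')=f$, i.e. $\alpha,\alpha'$ are two roots of the same $f\in\mathcal{I}_r$; then $\alpha'=\tau^j(\alpha)=\sigma^{nj}(\alpha)$ for some $j$, so $\alpha'=(E_2\sigma^{nj})(\alpha)$ lies in the ${\rm P\Gamma L}$-orbit of $\alpha$ via the Galois part of the group. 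Hence the induced map on orbit sets is a bijection, proving the two numbers of orbits are equal.

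I expect the only mild obstacle to be the bookkeeping in the equivariance step — specifically keeping track that $\sigma^i$ acts on coefficients (sending $f$ to $\sigma^i f$) simultaneously with moving the root $\alpha$ to $\alpha^{2^i}$, and confirming that "common root of degree $r$ forces equality of monic irreducibles of degree $r$" is applied correctly after the fractional-linear substitution. All of this is already essentially contained in the proof of Lemma \ref{action}, so no new difficulty arises; the argument is a clean transport-of-structure once the map $\Phi$ and its equivariance are pinned down.
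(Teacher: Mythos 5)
Your proposal is correct and is essentially the paper's argument in reverse packaging: the paper constructs the inverse correspondence at the level of orbits, sending ${\rm P\Gamma L}(f)$ to the set of roots of all polynomials in that orbit and verifying this set is exactly ${\rm P\Gamma L}(\alpha)$, using the same two key facts you use (that $A(\sigma^i\alpha)$ is a root of $(A\sigma^i)(f)$, and that the roots of a fixed $f$ are permuted by $\sigma^{nj}\in{\rm P\Gamma L}$). Your formulation via the equivariant surjection $\Phi\colon\mathcal{S}\to\mathcal{I}_r$, $\alpha\mapsto{\rm Irr}(\alpha,\mathbb{F}_q)$, with fibres contained in single orbits, is a clean equivalent of the same idea, so no genuinely new route or gap is involved.
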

\begin{proof}

Let ${\rm P\Gamma L}\verb|\|\mathcal{S}$ be the set of  orbits of ${\rm P\Gamma L}$ on $\mathcal{S}$ and let
${\rm P\Gamma L}\verb|\|\mathcal{I}_r$ be the set of  orbits of ${\rm P\Gamma L}$ on $\mathcal{I}_r$.
To prove $|{\rm P\Gamma L}\verb|\|\mathcal{S}|=|{\rm P\Gamma L}\verb|\|\mathcal{I}_r|$, it suffices to show that there is a bijection between ${\rm P\Gamma L}\verb|\|\mathcal{S}$ and ${\rm P\Gamma L}\verb|\|\mathcal{I}_r$.
Define a map $\varphi$ as follows:
\begin{eqnarray*}
\varphi: {\rm P\Gamma L}\verb|\|\mathcal{I}_r &\rightarrow & {\rm P\Gamma L}\verb|\|\mathcal{S}\\
{\rm P\Gamma L}(f) &\mapsto & \varphi\big({\rm P\Gamma L}(f)\big),
\end{eqnarray*}
where
$$\varphi\big({\rm P\Gamma L}(f)\big)=\big\{\alpha\,\big|\,\mbox{there exists a polynomial}~g(x)\in{\rm P\Gamma L}(f) ~\mbox{such that}~g(\alpha)=0\big\}.$$

We will show that $\varphi$ is a bijection by carrying out the following steps.

(1) $\varphi\big({\rm P\Gamma L}(f)\big)\in {\rm P\Gamma L}\verb|\|\mathcal{S}$.
Let $f(\alpha)=0$, i.e., $\alpha$ is a root of $f(x)$. Then
$(A\sigma^i f)\big(A(\sigma^i(\alpha))\big)=0.$
i.e., $A(\sigma^i(\alpha))$ is a root of $(A\sigma^i)\big(f(x)\big)$.
Assume that $\alpha, \alpha^q, \cdots,\alpha^{q^{r-1}}$ are all the distinct roots of $f(x)$.
Based on the above fact we have that
$$A(\sigma^i(\alpha)), A(\sigma^i(\alpha^q)),\cdots,A(\sigma^i(\alpha^{q^{r-1}}))$$
are all the distinct roots of the polynomial $(A\sigma^i)\big(f(x)\big)$.
Noting that $$A(\sigma^i(\alpha^{q^j}))=\big(A(\sigma^i\alpha)\big)^{q^j}=A(\sigma^{i+nj}\alpha), ~~j=0,1,\cdots,r-1,$$
we obtain that
\begin{eqnarray*}
\varphi\big({\rm P\Gamma L}(f)\big)&=&\big\{\big(A(\sigma^i\alpha)\big)^{q^j}\,\big|\,A\in {\rm GL}_,0\leq i\leq rn-1, 0\leq j\leq r-1\big\}\\
&=&\big\{A(\sigma^{i+nj}\alpha)\,\big|\,A\in {\rm GL},0\leq i\leq rn-1, 0\leq j\leq r-1\big\}\\
&=& \big\{A(\sigma^{i}\alpha)\,\big|\,A\in {\rm GL},0\leq i\leq rn-1\big\}\\
&=& {\rm P\Gamma L}(\alpha),
\end{eqnarray*}
which shows that $\varphi\big({\rm P\Gamma L}(f)\big)\in {\rm P\Gamma L}\verb|\|\mathcal{S}$.

(2) $\varphi$ is injective.
According to the definition of $\varphi$ it is obvious that $\varphi$ is injective.

(3) $\varphi$ is surjective.
Given an orbit $\Lambda=\rm{P\Gamma L}(\alpha)$ in ${\rm P\Gamma L}\verb|\|\mathcal{S}$, i.e., $\alpha\in \mathcal{S}$ and
$$\Lambda=\big\{  A(\sigma^{i}\alpha)\,\big|\,  A\in {\rm PGL},0\leq i\leq rn-1\big\}.$$
Suppose that $f(x)$ is an irreducible polynomial of degree $r$ over $\mathbb{F}_q$ with $f(\alpha)=0$, then
$\varphi\big({\rm P\Gamma L}(f)\big)=\Lambda.$
Thus $\varphi$ is surjective.
\end{proof}
By Lemma \ref{orbit},
our ultimate aim is to find the number of   orbits of  ${\rm P\Gamma L}={\rm PGL}\rtimes {\rm Gal}$ on the set $\mathcal{I}_r$.
For this purpose, we will repeatedly use the following fact to achieve this goal (for example, see \cite[Pages 35-36]{Kerber}):
\begin{lem}\label{book}
Let $G$ be a finite group acting on a finite set $X$ and let $N$ be a normal subgroup of $G$. It is clear that $N$ naturally acts on $X$. Suppose the $N$-orbits are denoted by $N\verb|\|X=\{N(x)\,|\,x\in X\}$. Then the factor group $G/N$ acts on $N\verb|\|X$ and the number of orbits of $G$ on $X$
is equal to the number of orbits of $G/N$ on $N\verb|\|X$.
\end{lem}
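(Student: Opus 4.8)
The plan is to first make the action of the factor group $G/N$ on the orbit space $N\backslash X$ completely explicit, and then to exhibit a concrete bijection between the set of $(G/N)$-orbits on $N\backslash X$ and the set of $G$-orbits on $X$.

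First I would define, for a coset $gN\in G/N$ and an $N$-orbit $N(x)\in N\backslash X$,
$$(gN)\cdot N(x)=N(gx).$$
The only step requiring real care is well-definedness, and this is exactly where the normality of $N$ enters. If $g'=gn$ with $n\in N$ and $x'=n'x$ with $n'\in N$, then
$$g'x'=gnn'x=(gnn'g^{-1})(gx),$$
and $gnn'g^{-1}\in N$ because $N$ is normal in $G$, so $N(g'x')=N(gx)$; hence the rule above does not depend on the chosen representatives. Once well-definedness is established, the action axioms are immediate: $(eN)\cdot N(x)=N(x)$, and $(gN)\cdot\big((hN)\cdot N(x)\big)=(gN)\cdot N(hx)=N(ghx)=(ghN)\cdot N(x)$.

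Next I would introduce the map
$$\Psi\colon (G/N)\backslash\big(N\backslash X\big)\longrightarrow G\backslash X,\qquad (G/N)\big(N(x)\big)\longmapsto G(x),$$
and verify it is a well-defined bijection. For well-definedness: if $N(y)$ lies in the same $(G/N)$-orbit as $N(x)$, then $N(y)=N(gx)$ for some $g\in G$, so $y\in G(x)$ and therefore $G(y)=G(x)$. Surjectivity is clear, since $G(x)$ is the image of $(G/N)\big(N(x)\big)$. For injectivity: if $G(x)=G(y)$, then $y=gx$ for some $g\in G$, whence $N(y)=N(gx)=(gN)\cdot N(x)$, so $N(x)$ and $N(y)$ lie in one $(G/N)$-orbit. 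Comparing cardinalities of the two finite sets then yields the claimed equality of the number of $G$-orbits on $X$ with the number of $(G/N)$-orbits on $N\backslash X$.

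The only genuinely substantive point is the well-definedness of the $G/N$-action on $N\backslash X$; the rest is a routine unwinding of definitions. Since this is a classical fact (as the cited reference \cite{Kerber} indicates), I do not expect any real obstacle — the write-up is essentially a matter of organizing the bookkeeping so that the role of normality is transparent.
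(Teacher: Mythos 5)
Your proof is correct and complete. The paper itself gives no argument for this lemma --- it simply cites Kerber's book --- so your write-up supplies exactly the standard argument that the reference contains: the induced action $(gN)\cdot N(x)=N(gx)$ is well defined precisely because $N\trianglelefteq G$, and the map $(G/N)\bigl(N(x)\bigr)\mapsto G(x)$ is a bijection between the two orbit sets. All steps (well-definedness, the action axioms, and both directions of the bijection) check out.
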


As $\rm{PGL}$ is a normal subgroup of $\rm{P\Gamma L}$, by virtue of  Lemma \ref{book}
we  first count the number of orbits of $\rm{PGL}$ on the set $\mathcal{I}_r$.
The next result shows that if $\gcd\big(r, q(q^2-1)\big)=1$, then the size of  each orbit of $\rm{PGL}$ on   $\mathcal{I}_r$
is equal to $q(q^2-1)$; in other words, ${\rm Stab}_{{\rm PGL}}(f)=\{E_2\}$ for any $f\in \mathcal{I}_r$, where $E_2=\left(
                                                                                                                       \begin{array}{cc}
                                                                                                                         1 & 0 \\
                                                                                                                         0 & 1 \\
                                                                                                                       \end{array}
                                                                                                                     \right)
$
is the identity matrix.
\begin{lem}\label{orbitsize1}
Let $r$ be a positive integer satisfying $r\geq 3$ and $\gcd\big(r, q(q^2-1)\big)=1$. Let $f(x)\in \mathcal{I}_r$ and
$${\rm Stab}_{{\rm PGL}}(f)=\big\{A\in {\rm PGL}\,\big|\,Af=f\big\}.$$
Then ${\rm Stab}_{{\rm PGL}}(f)=\{E_2\}$.
\end{lem}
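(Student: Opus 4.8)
The plan is to prove that any matrix $A\in{\rm Stab}_{{\rm PGL}}(f)$ has order (in ${\rm PGL}$) dividing both $r$ and $q(q^2-1)$, so that the coprimality hypothesis $\gcd\big(r,q(q^2-1)\big)=1$ forces $A=E_2$. The mechanism is that a stabilizing $A$ must permute the roots of $f$, and because $A$ is defined over $\mathbb{F}_q$ this permutation is compatible with the $q$-power Frobenius, which pins down its cycle structure.

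First I would translate the fixed-point condition into a statement about roots. Fix a root $\alpha\in\mathbb{F}_{q^r}$ of $f$, so that $\alpha,\alpha^q,\dots,\alpha^{q^{r-1}}$ are the $r$ distinct roots of $f$. As already established in the proof of Lemma \ref{action}, $Af$ has degree $r$, has no repeated roots, and its root set is exactly $\{A\beta\,|\,f(\beta)=0\}$; here each $A\beta$ is a genuine element of $\mathbb{F}_{q^r}$ (never $\infty$), since $\beta$ has degree $r\geq 3$ over $\mathbb{F}_q$ and hence $\beta\notin\mathbb{F}_q$, so $c\beta+d\neq0$. Because $f$ is monic, the condition $(Af)^*=f$ means $Af=\lambda f$ for some $\lambda\in\mathbb{F}_q^*$, and $\lambda f$ has the same roots as $f$; therefore $A$ permutes the root set $\{\alpha^{q^k}\,|\,0\le k\le r-1\}$ of $f$.

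Next I would use that the coefficients of $A$ lie in $\mathbb{F}_q$, so $(A\gamma)^q=A(\gamma^q)$ for every $\gamma\in\mathbb{F}_{q^r}$. Since $A$ maps $\alpha$ to some root of $f$, write $A\alpha=\alpha^{q^j}$ with $0\le j\le r-1$; then $A(\alpha^{q^k})=(A\alpha)^{q^k}=\alpha^{q^{k+j}}$ for all $k$, with exponents read modulo $r$. Thus $A$ acts on the $r$ roots as an iterate of the cyclic shift, and in particular $A^r$ fixes every root of $f$. Regarding $A$ inside ${\rm PGL}_2(\mathbb{F}_{q^r})$ via the natural injection ${\rm PGL}_2(\mathbb{F}_q)\hookrightarrow{\rm PGL}_2(\mathbb{F}_{q^r})$, the element $A^r$ fixes $r\geq 3>2$ points of $\mathbb{P}^1(\mathbb{F}_{q^r})$; since a nontrivial element of ${\rm PGL}_2$ over any field fixes at most two points of the projective line (a Möbius map $\tfrac{sx+t}{ux+v}$ fixing $x$ satisfies $ux^2+(v-s)x-t=0$, which has at most two roots unless $u=0$, $v=s$, $t=0$, i.e. the map is the identity), we get $A^r=E_2$. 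Hence the order of $A$ in ${\rm PGL}$ divides $r$; by Lagrange's theorem it also divides $|{\rm PGL}_2(\mathbb{F}_q)|=q(q^2-1)$. Since $\gcd\big(r,q(q^2-1)\big)=1$, the order of $A$ is $1$, that is, $A=E_2$, which proves ${\rm Stab}_{{\rm PGL}}(f)=\{E_2\}$.

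The argument is mostly bookkeeping, and the delicate analytic points about $Af$ (degree $r$, squarefree, root set the $A$-image of the roots of $f$) are already available from the proof of Lemma \ref{action}. The one spot demanding a little care is the implication ``$A^r$ fixes $r$ roots'' $\Rightarrow$ ``$A^r=E_2$'': one must observe that although $A$ is an element of ${\rm PGL}_2(\mathbb{F}_q)$, it should be viewed in ${\rm PGL}_2(\mathbb{F}_{q^r})$ when discussing fixed points among the $\alpha^{q^k}$, and that this inclusion of groups is injective, so vanishing over $\mathbb{F}_{q^r}$ yields vanishing over $\mathbb{F}_q$. Everything else is routine, and the hypothesis $r\geq 3$ (giving more than two fixed points) together with $\gcd\big(r,q(q^2-1)\big)=1$ is exactly what makes the conclusion go through.
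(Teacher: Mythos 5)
Your proof is correct and rests on the same two ingredients as the paper's: the Frobenius-equivariance $A(\gamma^q)=(A\gamma)^q$ forcing $A$ to act on the roots as a power of the cyclic shift, and the coprimality of $r$ with $|{\rm PGL}|=q(q^2-1)$. The only difference is the order of the final steps — the paper first uses the order of $A$ to deduce $A\alpha=\alpha$ and then invokes $r\geq 3$ via the quadratic $c\alpha^2+(d+a)\alpha+b=0$, whereas you first use the at-most-two-fixed-points argument to get $A^r=E_2$ and then invoke coprimality — which is an inessential rearrangement.
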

\begin{proof}
Fix an irreducible polynomial $f(x)$ of degree $r$ over $\mathbb{F}_q$, i.e., $f(x)\in \mathcal{I}_r$.
Suppose  $A\in {\rm Stab}_{{\rm PGL}}(f)$ and the order of $A\in{\rm PGL}$ is equal to $\ell$,
where $A=\begin{pmatrix}a & b \\ c & d \end{pmatrix}$.
In the following we want to  prove that $A$ must be the identity matrix, i.e., $A=E_2$.
Let $\alpha$ be a root of $f(x)$. Then $  A\alpha$ is a root of $Af$.
Thus $A\alpha$  is also a root of $f(x)$.
Hence there exists a positive $s$ satisfying   $0\leq s\leq r-1$ such that $A\alpha=\alpha^{q^s}$,
thus $A^\ell\alpha=\alpha^{q^{s\ell}}$.
Since $A^\ell=E_2$, we have $A^\ell\alpha=E_2\alpha=\alpha$, therefore
$\alpha^{q^{s\ell}}=\alpha.$
This shows that $\mathbb{F}_{q^r}=\mathbb{F}_q(\alpha)\subseteq \mathbb{F}_{q^{s\ell}}$,
which gives that $r$ is a divisor of $s\ell$.

Clearly, $\ell$ is a divisor of $q(q^2-1)$. Since $\gcd\big(r, q(q^2-1)\big)=1$, we obtain that $\gcd(r,\ell)=1$.
Hence $r$ is a divisor of $s$. We know that $\alpha^{q^r}=\alpha$, and we get  $\alpha^{q^s}=\alpha$; it follows
from $A\alpha=\alpha^{q^s}$ that $A\alpha=\alpha$.
Substitute  $A$ to $\begin{pmatrix}a & b \\ c & d \end{pmatrix}$, and it leads to
$$\frac{a\alpha+b}{c\alpha+d}=\alpha,$$
that is to say
$c\alpha^2+(d+a)\alpha+b=0.$
By our assumption  $r\geq 3$, the above equality tells us that $c=b=0$ and $a=d$, i.e.,  $A=E_2$.
This concludes the proof.
\end{proof}
Lemma \ref{orbitsize1} shows that the size of every orbit ${\rm PGL}(f)$ of    ${\rm PGL}$  on $\mathcal{I}_r$ is equal to
$$|{\rm PGL}(f)|=\big[{\rm PGL}: {\rm Stab}_{{\rm PGL}}(f)\big]=\big|{\rm PGL}\big|=q(q^2-1).$$
Let ${\rm PGL}\verb|\|\mathcal{I}_r$
be the set of all  orbits of ${\rm PGL}$ on $\mathcal{I}_r$.
It follows from Lemma \ref{orbitsize1} and the enumerative formula for the size of $\mathcal{I}_r$ (see \cite[Theorem 3.25]{lidl}) that
\begin{equation}\label{ir}
\big|{\rm PGL}\verb|\|\mathcal{I}_r\big|=\frac{|\mathcal{I}_r|}{q(q^2-1)}=\frac{\sum\limits_{d\,|\,r}\mu(d)q^{r/d}}{rq(q^2-1)},
\end{equation}
where $\mu$ is the M\"{o}bius function.

\subsection{The orbits of ${\rm Gal}$ on ${\rm PGL}\backslash\mathcal{I}_r$}
In order to get the number of orbits of ${\rm P\Gamma L}$ on $\mathcal{I}_r$, by Lemmas \ref{book} and \ref{orbitsize1}, we have to count the number of orbits of   ${\rm Gal}$ on ${\rm PGL}\verb|\|\mathcal{I}_r$.
Recall that   the Galois group ${\rm Gal}={\rm Gal}(\mathbb{F}_{q^r}/\mathbb{F}_2)={\rm Gal}(\mathbb{F}_{2^{rn}}/\mathbb{F}_2)=\langle\sigma \rangle$ is the cyclic group of order $rn$ generated by $\sigma.$
The action of  ${\rm Gal}$   on ${\rm PGL}\verb|\|\mathcal{I}_r$ is given by
\begin{eqnarray*}
{\rm Gal}\times {\rm PGL}\verb|\|\mathcal{I}_r &\rightarrow & {\rm PGL}\verb|\|\mathcal{I}_r\\
\big(\sigma^i, {\rm PGL}(f)\big)&\mapsto & \sigma^i\big({\rm PGL}(f)\big)={\rm PGL}(\sigma^if).
\end{eqnarray*}
Recall also that $n\geq 3$ is a prime number, $q=2^n$ and  $r$ is a positive integer satisfying $\gcd(r,n)=1$ and $\gcd\big(r, q(q^2-1)\big)=1$.
Since $\gcd(r,n)=1$, ${\rm Gal}=\langle\sigma\rangle$ has the following decomposition into direct products:
$${\rm Gal}=\langle\sigma^r\rangle\times \langle\sigma^n\rangle.$$
In order to count the number of orbits of   ${\rm Gal}$ on ${\rm PGL}\verb|\|\mathcal{I}_r$, using Lemma \ref{book} again we first consider the
action of $\langle\sigma^n\rangle$ on  ${\rm PGL}\verb|\|\mathcal{I}_r$ ($\langle\sigma^n\rangle$ is certainly a normal subgroup of ${\rm Gal}$).
Note that the action of $\langle\sigma^n\rangle$ on ${\rm PGL}\verb|\|\mathcal{I}_r$ is given by
\begin{eqnarray*}
\langle \sigma^n\rangle\times {\rm PGL}\verb|\|\mathcal{I}_r &\rightarrow & {\rm PGL}\verb|\|\mathcal{I}_r\\
\big(\sigma^{ni}, {\rm PGL}(f)\big)&\mapsto & \sigma^{ni}\big({\rm PGL}(f)\big)={\rm PGL}(\sigma^{ni}f).
\end{eqnarray*}
Observe that
$\sigma^na=a^{2^n}=a^q=a \hbox{ for any $a\in \mathbb{F}_q$,}$
which gives
$${\rm PGL}(\sigma^{ni}f)={\rm PGL}(f)~~\hbox{for any $f\in \mathcal{I}_r$}.$$
That is to say that $\langle\sigma^n\rangle$ fixes each ${\rm PGL}(f)$ in ${\rm PGL}\verb|\|\mathcal{I}_r$; in other words,
the set of orbits of $\langle\sigma^n\rangle$ on ${\rm PGL}\verb|\|\mathcal{I}_r$  remains ${\rm PGL}\verb|\|\mathcal{I}_r$.
By Lemma \ref{book}, the number of orbits of ${\rm Gal}$ on ${\rm PGL}\verb|\|\mathcal{I}_r$ is equal to the
number of orbits of $\langle \sigma^r\rangle$ on ${\rm PGL}\verb|\|\mathcal{I}_r$.
Since $\langle\sigma^r\rangle$ is of prime order $n$, the size of every orbit of $\langle \sigma^r\rangle$ on ${\rm PGL}\verb|\|\mathcal{I}_r$ is equal to $1$ or $n$.
Thus it is enough to determine the number of  orbits of $\langle \sigma^r\rangle$ on ${\rm PGL}\verb|\|\mathcal{I}_r$ with size $1$.


\begin{lem}\label{equivalent}
Let notation be the same as before. Assume that $r\geq3$ is a positive integer satisfying
$\gcd\big(r,q(q^2-1)\big)=1$. Let $f\in \mathcal{I}_r$ and let $\alpha$ be a root of $f(x)$. Then
${\rm PGL}(\sigma^r f)={\rm PGL}(f)$
if and only if
${\rm PGL}(\sigma^r\alpha)={\rm PGL}(\alpha).$
\end{lem}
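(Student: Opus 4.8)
The plan is to prove the two implications separately, using the translation between the action on polynomials and the action on roots that is already encoded in Lemma \ref{orbit}. Recall that for $f\in\mathcal{I}_r$ with root $\alpha$, the polynomial $\sigma^r f$ is again irreducible of degree $r$, and its roots are exactly $\sigma^r\alpha,(\sigma^r\alpha)^q,\dots,(\sigma^r\alpha)^{q^{r-1}}$, i.e. the Frobenius conjugates of $\sigma^r\alpha$ over $\mathbb{F}_q$. So $\sigma^r\alpha$ is a root of $\sigma^r f$, and more generally, for any $A\in{\rm PGL}$, $A(\sigma^r\alpha)$ is a root of $(A\sigma^r)(f)=\big(A(\sigma^r f)\big)^*$.

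First I would prove the forward direction. Assume ${\rm PGL}(\sigma^r f)={\rm PGL}(f)$, so there is $A\in{\rm PGL}$ with $A(\sigma^r f)^*=f$, equivalently $(A\sigma^r)(f)=f$. Then $A(\sigma^r\alpha)$ is a root of $f$, hence $A(\sigma^r\alpha)=\alpha^{q^s}$ for some $0\le s\le r-1$. Since $\alpha^{q^s}=\sigma^{ns}\alpha\in{\rm PGL}(\alpha)$ (indeed $\sigma^{ns}\alpha$ lies in the ${\rm PGL}$-orbit of $\alpha$ via the identity matrix, because $\alpha^{q^s}=E_2\,\sigma^{ns}\alpha$, but here we only need it as a ${\rm PGL}$-image of $\alpha$ — actually $\alpha^{q^s}$ is directly a ${\rm PGL}$-translate of $\alpha$ only if we allow the Frobenius; so the cleaner statement is that $\alpha^{q^s}\in{\rm PGL}(\alpha)$ fails in general). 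Let me instead argue: $A(\sigma^r\alpha)=\alpha^{q^s}$ means $\sigma^r\alpha = A^{-1}(\alpha^{q^s})=A^{-1}\big((\sigma^{ns}\alpha)\big)$; applying $\sigma^{-ns}$ (which commutes with the ${\rm PGL}$-action only up to conjugating the matrix) we get $\sigma^{r-ns}\alpha = B\alpha$ where $B=\sigma^{-ns}(A^{-1})\in{\rm PGL}$. Now $\sigma^{r-ns}\alpha$ and $\sigma^r\alpha$ have the same ${\rm PGL}$-orbit because they differ by a power of $\sigma^n=\tau\in{\rm Gal}(\mathbb{F}_{q^r}/\mathbb{F}_q)$, and one checks $\tau$ acts trivially on ${\rm PGL}\backslash\mathcal{S}$ exactly as $\sigma^n$ does on ${\rm PGL}\backslash\mathcal{I}_r$ — this is the key bookkeeping point. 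Hence ${\rm PGL}(\sigma^r\alpha)={\rm PGL}(\sigma^{r-ns}\alpha)={\rm PGL}(B\alpha)={\rm PGL}(\alpha)$.

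For the converse, assume ${\rm PGL}(\sigma^r\alpha)={\rm PGL}(\alpha)$, so $\sigma^r\alpha=A\alpha$ for some $A\in{\rm PGL}$. I want to deduce $(A\sigma^r)(f)$ and $f$ are equal, or at least lie in the same ${\rm PGL}$-orbit. Since $\sigma^r\alpha$ is a root of $\sigma^r f$ and equals $A\alpha$, which is a root of $Af$, and both $\sigma^r f$ and $Af$ are monic (after normalizing) irreducible of degree $r$ with a common root, they must be equal: $(Af)^* = \sigma^r f$, i.e. $(A^{-1}\sigma^r)(f) = f$ after normalization, so ${\rm PGL}(\sigma^r f)={\rm PGL}(f)$. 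Here I use that two monic irreducible polynomials over $\mathbb{F}_q$ sharing a root coincide. I would need to be careful about the normalization $(\cdot)^*$ and about the semidirect-product twist in composing $A$ with $\sigma^r$, but these are routine given the group-action formalism of Lemma \ref{action}.

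The main obstacle I anticipate is the bookkeeping with the Frobenius twists: the map $\alpha\mapsto\alpha^{q^s}=\sigma^{ns}\alpha$ does not literally land in the ${\rm PGL}$-orbit of $\alpha$ inside $\mathcal{S}$, so one must consistently pass to orbits and use that $\langle\sigma^n\rangle$ acts trivially on ${\rm PGL}\backslash\mathcal{S}$ (the exact analogue, proved in the excerpt for ${\rm PGL}\backslash\mathcal{I}_r$, holds for $\mathcal{S}$ by the bijection $\varphi$ of Lemma \ref{orbit}, since $\varphi$ intertwines the two ${\rm Gal}$-actions). Once that triviality is in hand, both implications reduce to the elementary fact that a monic irreducible polynomial over $\mathbb{F}_q$ is determined by any one of its roots, applied to $\sigma^r f$, together with the transport of structure furnished by $\varphi$.
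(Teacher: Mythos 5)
Your converse direction is fine and agrees with the paper: from $A(\sigma^r\alpha)=\alpha$ you get that $\alpha$ is a common root of the monic irreducible polynomials $f$ and $\big(A(\sigma^r f)\big)^*$, which therefore coincide, so ${\rm PGL}(\sigma^r f)={\rm PGL}(f)$. The forward direction, however, has a genuine gap, and it sits exactly at the step you call ``the key bookkeeping point.'' You assert that $\sigma^n$ (equivalently $\beta\mapsto\beta^{q}$) acts trivially on ${\rm PGL}\backslash\mathcal{S}$, by analogy with its trivial action on ${\rm PGL}\backslash\mathcal{I}_r$. This is false. Under the standing hypotheses the ${\rm PGL}$-action on $\mathcal{S}$ is free (the computation of Lemma \ref{orbitsize1}: $A\alpha=\alpha$ with $\deg\alpha=r\geq 3$ forces $A=E_2$), so ${\rm PGL}(f)$ consists of $q(q^2-1)$ distinct monic irreducibles whose $rq(q^2-1)$ roots are exactly $\bigcup_{j=0}^{r-1}{\rm PGL}\big(\alpha^{q^j}\big)$; since each of these $r$ orbits has exactly $q(q^2-1)$ elements, they must be pairwise disjoint. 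Hence ${\rm PGL}(\alpha^{q})\neq{\rm PGL}(\alpha)$ for every $\alpha\in\mathcal{S}$: the Frobenius $\sigma^{n}$ permutes these $r$ orbits in a single $r$-cycle. (Consistently, $|{\rm PGL}\backslash\mathcal{S}|=r\,|{\rm PGL}\backslash\mathcal{I}_r|$; the bijection $\varphi$ of Lemma \ref{orbit} is between ${\rm P\Gamma L}$-orbits, not ${\rm PGL}$-orbits, so it cannot transport the triviality statement from $\mathcal{I}_r$ to $\mathcal{S}$.) Consequently your step ${\rm PGL}(\sigma^{r}\alpha)={\rm PGL}(\sigma^{r-ns}\alpha)$ is unjustified: it holds precisely when $r\mid s$, and proving $r\mid s$ is the actual content of this direction of the lemma.

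The paper closes this gap with an order argument that uses both gcd hypotheses. From $A(\sigma^{r}\alpha)=\alpha^{q^{s}}$ one computes $(A\sigma^{r})^{n}=M\sigma^{0}$ with $M=A\,\sigma^{r}(A)\cdots\sigma^{r(n-1)}(A)\in{\rm PGL}$; letting $\ell$ be the order of $M$, the element $(A\sigma^{r})^{\ell n}$ is the identity of ${\rm P\Gamma L}$, while iterating the relation gives $(A\sigma^{r})^{\ell n}\alpha=\alpha^{q^{\ell n s}}$. Hence $\alpha^{q^{\ell n s}}=\alpha$, so $r\mid \ell n s$; since $\ell$ divides $q(q^2-1)$ and $\gcd\big(r,q(q^2-1)\big)=\gcd(r,n)=1$, this forces $r\mid s$, whence $\alpha^{q^{s}}=\alpha$ and $A(\sigma^{r}\alpha)=\alpha$, giving ${\rm PGL}(\sigma^{r}\alpha)={\rm PGL}(\alpha)$ directly. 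You need to replace your bookkeeping claim with this (or an equivalent) argument; without it the forward implication does not go through.
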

\begin{proof}
Suppose ${\rm PGL}(\sigma^r f)={\rm PGL}(f)$. Then there exists $A\in {\rm PGL}$ such that
$A(\sigma^rf)=f$. This implies that  $A(\sigma^r\alpha)$ is also a root of $f(x)$.
Thus there is an integer $s\geq 0$ satisfying
\begin{equation}\label{equation1}
A(\sigma^r\alpha)=\alpha^{q^s}.
\end{equation}
Simple algebraic calculations in ${\rm P\Gamma L}$ show that
$$(A\sigma^r)^n=A\sigma^r(A)\sigma^{2r}(A)\cdots\sigma^{r(n-1)}(A)\sigma^{rn}
=A\sigma^r(A)\sigma^{2r}(A)\cdots\sigma^{r(n-1)}(A)\sigma^0.$$
Let $\ell$ be the order of $A\sigma^r(A)\sigma^{2r}(A)\cdots\sigma^{r(n-1)}(A)$
in the group ${\rm PGL}$.
Then
$(A\sigma^r)^{\ell n}=E_2\sigma^0.$
Hence, on the one hand,
$(A\sigma^r)^{\ell n}\alpha=(E_2\sigma^0)\alpha=\alpha;$
on the other hand, from (\ref{equation1}) we have that
$(A\sigma^r)^{\ell n}\alpha=\alpha^{q^{\ell ns}}.$
Therefore
$\alpha^{q^{\ell ns}}=\alpha,$
yielding that $r$ is a divisor of  $\ell ns$.
Noting that $\ell$ is a divisor of $q(q^2-1)$, we get that $\gcd(r,\ell)=1$. Thus $\gcd(r, \ell n)=1$.
It follows that $r$ is a divisor of $s$.
Therefore (\ref{equation1}) becomes
$A(\sigma^r\alpha)=\alpha^{q^s}=\alpha,$
which implies   ${\rm PGL}(\sigma^r\alpha)={\rm PGL}(\alpha)$.

Conversely, suppose that ${\rm PGL}(\sigma^r\alpha)={\rm PGL}(\alpha)$.
Then there is a matrix $A\in {\rm PGL}$ such that $A(\sigma^r\alpha)=\alpha$.
Note that $A(\sigma^r\alpha)$ is a root of $A(\sigma^rf)$, and then we obtain
${\rm PGL}(\sigma^r f)={\rm PGL}(f).$
We are done.
\end{proof}
To count the number of
${\rm PGL}(f)\in {\rm PGL}\verb|\|\mathcal{I}_r$ that are fixed by $\langle \sigma^r\rangle$, by virtue of \cite{ryan15} we need to use the
affine general linear group ${\rm AGL}$. The affine general linear group ${\rm AGL}$ can be viewed naturally as a subgroup of ${\rm PGL}$.
Hence, the group ${\rm AGL}$ acts on the set $\mathcal{S}$ naturally.
Let
$$
{\rm AGL}\verb|\|\mathcal{S}=\big\{{\rm AGL}(\alpha)\,\big|\,\alpha\in \mathcal{S}\big\}
$$
be the set of all orbits of ${\rm AGL}$ on $\mathcal{S}$. Then the cyclic group $\langle \sigma^r\rangle$ acts on
${\rm AGL}\verb|\|\mathcal{S}$ in the following way:

\begin{equation}\label{AGL-action}
\langle \sigma^r\rangle\times {\rm AGL}\verb|\|\mathcal{S}  \rightarrow  {\rm AGL}\verb|\|\mathcal{S},~~~~
\\
\big(\sigma^{ri}, {\rm AGL}(\alpha)\big)\mapsto  \sigma^{ri}\big({\rm AGL}(\alpha)\big)={\rm AGL}(\sigma^{ri}\alpha).
\end{equation}
It is not hard to verify that this is indeed a group action.
We now turn to consider the orbit ${\rm PGL}(\alpha)$ where $\alpha\in \mathcal{S}$.
There is an action of ${\rm AGL}$ on ${\rm PGL}(\alpha)$:
\begin{eqnarray*}
{\rm AGL}\times {\rm PGL}(\alpha) &\rightarrow & {\rm PGL}(\alpha)\\
(C,  A\alpha)&\mapsto & CA\alpha.
\end{eqnarray*}
Therefore, ${\rm PGL}(\alpha)$ is the disjoint union of ${\rm AGL}$-orbits. Indeed, one can easily check that there are exactly
$q+1$ right cosets of ${\rm AGL}$ in ${\rm PGL}$ and
$$
t_0=E_2=\left(
                         \begin{array}{cc}
                           1 &0 \\
                           0 & 1 \\
                         \end{array}
                       \right),~~ t_1=\left(
                         \begin{array}{cc}
                           0 & 1 \\
                           1 & 0 \\
                         \end{array}
                       \right)
\hbox{~~and~~}
t_\gamma=\left(
                         \begin{array}{cc}
                           0 & 1 \\
                           1 & \gamma \\
                         \end{array}
                      \right)    \hbox{~~for any $\gamma\in \mathbb{F}_q^*$}
$$
consists of a right coset representative of ${\rm AGL}$ in ${\rm PGL}$.
The coset decomposition
$${\rm PGL}={\rm AGL}t_0\bigcup{\rm AGL}t_1\bigcup_{\gamma\in \mathbb{F}_q^*}{\rm AGL}t_\gamma$$
gives rise to the orbit decomposition of ${\rm PGL}(\alpha)$ into ${\rm AGL}$-orbits
$${\rm PGL}(\alpha)={\rm AGL}(t_0\alpha)\bigcup{\rm AGL}(t_1\alpha)\bigcup_{\gamma\in \mathbb{F}_q^*}{\rm AGL}(t_\gamma\alpha).$$
We have arrived at the following result, which has been appeared previously in \cite{yueqin} and \cite{ryan15}.
\begin{lem}\label{partition}
Let $\alpha\in \mathcal{S}$. Then
$${\rm PGL}(\alpha)=\bigcup_{\gamma\in \mathbb{F}_q}{\rm AGL}\Big(\frac{1}{\alpha+\gamma}\Big)\bigcup{\rm AGL}(\alpha),$$
is a partition of ${\rm PGL}(\alpha)$ into ${\rm AGL}$-orbits.
\end{lem}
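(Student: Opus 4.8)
The plan is to derive the statement directly from the coset decomposition of ${\rm PGL}$ over ${\rm AGL}$ recorded immediately above, together with the elementary fact that, since ${\rm AGL}$ is a subgroup of ${\rm PGL}$, the ${\rm AGL}$-orbits lying inside a single ${\rm PGL}$-orbit form a partition of that ${\rm PGL}$-orbit. Concretely, fix $\alpha\in\mathcal{S}$. Every element of ${\rm PGL}(\alpha)$ has the form $A\alpha$ with $A\in{\rm PGL}$; writing $A=Ct$ with $C\in{\rm AGL}$ and $t\in\{t_0,t_1\}\cup\{t_\gamma\mid\gamma\in\mathbb{F}_q^*\}$ a right coset representative, one has $A\alpha=C(t\alpha)\in{\rm AGL}(t\alpha)$. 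Hence
$${\rm PGL}(\alpha)={\rm AGL}(t_0\alpha)\bigcup{\rm AGL}(t_1\alpha)\bigcup_{\gamma\in\mathbb{F}_q^*}{\rm AGL}(t_\gamma\alpha),$$
and, any two of the listed ${\rm AGL}$-orbits being equal or disjoint, this exhibits ${\rm PGL}(\alpha)$ as a union of ${\rm AGL}$-orbits that exhausts ${\rm PGL}(\alpha)$.

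Next I would evaluate the representatives on $\alpha$. Since $t_0=E_2$ we get $t_0\alpha=\alpha$; since $t_1=\begin{pmatrix}0&1\\1&0\end{pmatrix}$ we get $t_1\alpha=\tfrac1\alpha$; and for $\gamma\in\mathbb{F}_q^*$, $t_\gamma=\begin{pmatrix}0&1\\1&\gamma\end{pmatrix}$ gives $t_\gamma\alpha=\tfrac{1}{\alpha+\gamma}$. Here one should note that these expressions are legitimate: because $\alpha\in\mathcal{S}$ has degree $r\geq3$ over $\mathbb{F}_q$, in particular $\alpha\notin\mathbb{F}_q$, so $\alpha\neq 0$ and $\alpha+\gamma\neq 0$ for every $\gamma\in\mathbb{F}_q$ (if $\alpha+\gamma=0$ then $\alpha=\gamma\in\mathbb{F}_q$ in characteristic $2$, a contradiction); moreover $\tfrac1\alpha$ and each $\tfrac{1}{\alpha+\gamma}$ again lie in $\mathcal{S}$ since ${\rm PGL}$ acts on $\mathcal{S}$. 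Observing that $\gamma=0$ in $\tfrac{1}{\alpha+\gamma}$ recovers $t_1\alpha=\tfrac1\alpha$, the term ${\rm AGL}(t_1\alpha)$ together with the terms ${\rm AGL}(t_\gamma\alpha)$, $\gamma\in\mathbb{F}_q^*$, merges into $\bigcup_{\gamma\in\mathbb{F}_q}{\rm AGL}\big(\tfrac{1}{\alpha+\gamma}\big)$. Substituting back yields
$${\rm PGL}(\alpha)=\bigcup_{\gamma\in\mathbb{F}_q}{\rm AGL}\Big(\frac{1}{\alpha+\gamma}\Big)\bigcup{\rm AGL}(\alpha),$$
which is the asserted decomposition, the distinct orbits among those displayed being pairwise disjoint and covering ${\rm PGL}(\alpha)$.

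There is essentially no serious obstacle here; the one point that will require a little care is the justification that $t_0,t_1$ and the $t_\gamma$ ($\gamma\in\mathbb{F}_q^*$) really do form a complete set of $q+1$ right coset representatives of ${\rm AGL}$ in ${\rm PGL}$. This is precisely the fact invoked (and declared "easily checked") in the paragraph preceding the lemma, so I would cite it; if a self-contained argument is wanted, it follows from the index count $[{\rm PGL}:{\rm AGL}]=q(q^2-1)/\big(q(q-1)\big)=q+1$ together with a short inspection of bottom-left entries showing that no two of the listed matrices differ by a left factor in ${\rm AGL}$. With that established, everything else is routine bookkeeping.
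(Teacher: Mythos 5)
Your argument is correct and is essentially the same as the paper's: the paper likewise obtains the lemma directly from the right coset decomposition ${\rm PGL}={\rm AGL}t_0\cup{\rm AGL}t_1\cup\bigcup_{\gamma\in\mathbb{F}_q^*}{\rm AGL}t_\gamma$ and the evaluation $t_0\alpha=\alpha$, $t_1\alpha=1/\alpha$, $t_\gamma\alpha=1/(\alpha+\gamma)$. Your additional remarks (that $\alpha\notin\mathbb{F}_q$ so the expressions are well defined, and the index count $[{\rm PGL}:{\rm AGL}]=q+1$) are correct and only make explicit what the paper leaves as "easily checked."
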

Lemma \ref{partition} implies that
\begin{equation*}
\begin{split}
{\rm PGL}\big(\sigma^r(\alpha)\big)&=\bigcup_{\gamma\in
\mathbb{F}_q}{\rm AGL}\Big(\frac{1}{\sigma^r(\alpha)+\gamma}\Big)\bigcup{\rm AGL}\big(\sigma^r(\alpha)\big)\\
&=\bigcup_{\gamma\in
\mathbb{F}_q}{\rm AGL}\Big(\frac{1}{\sigma^r(\alpha)+\sigma^r(\gamma)}\Big)\bigcup{\rm AGL}\big(\sigma^r(\alpha)\big)\\
&=\bigcup_{\gamma\in
\mathbb{F}_q}{\rm AGL}\Big(\sigma^r\Big(\frac{1}{\alpha+\gamma}\Big)\Big)\bigcup{\rm AGL}\big(\sigma^r(\alpha)\big).\\
\end{split}
\end{equation*}
Suppose now that ${\rm PGL}(\alpha)$ is fixed by the cyclic group $\langle\sigma^r\rangle$,
i.e., ${\rm PGL}(\sigma^r\alpha)={\rm PGL}(\alpha)$.
In this case, the cyclic group $\langle \sigma^r\rangle$ acts on the set of ${\rm AGL}$-orbits
$${\rm AGL}\verb|\|{\rm PGL}(\alpha)=\Big\{{\rm AGL}(\alpha), {\rm AGL}\Big(\frac{1}{\alpha+\gamma}\Big)\,\Big|\,\gamma\in \mathbb{F}_q\Big\}$$
in the way given in (\ref{AGL-action}).
The next result  has been appeared in \cite{yueqin} and \cite{ryan15}.
\begin{lem}\label{fixpoint}
Let $n>3$ be a prime number. If ${\rm PGL}(\sigma^r\alpha)={\rm PGL}(\alpha)$,
then there exists a fixed point of $\langle\sigma^r\rangle$  on ${\rm AGL}\verb|\|{\rm PGL}(\alpha)$. In other words,
either ${\rm AGL}(\sigma^r\alpha)={\rm AGL}(\alpha)$ or
${\rm AGL}\big(\sigma^r(\frac{1}{\alpha+\gamma})\big)={\rm AGL}\big(\frac{1}{\alpha+\gamma}\big)$
for some $\gamma\in \mathbb{F}_q$.
\end{lem}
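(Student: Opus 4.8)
The plan is to exploit the fact that $\langle\sigma^r\rangle$ has prime order $n$, so that every orbit of $\langle\sigma^r\rangle$ on the finite set ${\rm AGL}\verb|\|{\rm PGL}(\alpha)$ has size $1$ or $n$; a fixed point exists unless every orbit has size exactly $n$, i.e.\ unless $n$ divides $\big|{\rm AGL}\verb|\|{\rm PGL}(\alpha)\big|$. By Lemma \ref{partition}, this set has exactly $q+2 = 2^n+2$ elements. So the natural first step is to observe that if $n \nmid (2^n+2)$, we are immediately done by a counting argument. Thus the whole issue is the case $n \mid (2^n+2)$.

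Next I would analyze $n \mid (2^n+2)$. By Fermat's little theorem $2^n \equiv 2 \pmod{n}$, hence $2^n + 2 \equiv 4 \pmod{n}$, so $n \mid (2^n+2)$ forces $n \mid 4$, which is impossible for a prime $n > 3$. Therefore $n \nmid (2^n+2)$ \emph{always} holds under the hypothesis $n>3$ prime, and the size argument alone suffices: among the $q+2$ points of ${\rm AGL}\verb|\|{\rm PGL}(\alpha)$, since $n \nmid (q+2)$ the $\langle\sigma^r\rangle$-action cannot have all orbits of size $n$, so at least one orbit has size $1$, which is exactly a fixed point. Unwinding the description of ${\rm AGL}\verb|\|{\rm PGL}(\alpha)$ from Lemma \ref{partition}, that fixed point is either ${\rm AGL}(\alpha)$ (giving ${\rm AGL}(\sigma^r\alpha)={\rm AGL}(\alpha)$) or some ${\rm AGL}\big(\frac{1}{\alpha+\gamma}\big)$ with $\gamma\in\mathbb{F}_q$ (giving ${\rm AGL}\big(\sigma^r(\frac{1}{\alpha+\gamma})\big)={\rm AGL}\big(\frac{1}{\alpha+\gamma}\big)$), which is the claimed dichotomy.

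One point that needs care before the counting goes through: one must know that $\langle\sigma^r\rangle$ genuinely acts on ${\rm AGL}\verb|\|{\rm PGL}(\alpha)$, i.e.\ that $\sigma^r$ permutes the ${\rm AGL}$-orbits inside ${\rm PGL}(\alpha)$. This uses the hypothesis ${\rm PGL}(\sigma^r\alpha)={\rm PGL}(\alpha)$ together with the computation displayed just before the statement, namely that $\sigma^r$ commutes with the partition in Lemma \ref{partition} because $\sigma^r$ is a field automorphism fixing $\mathbb{F}_q$ setwise (it permutes $\mathbb{F}_q = \{\gamma\}$ among themselves) — this is precisely the chain of equalities showing ${\rm PGL}(\sigma^r\alpha) = \bigcup_\gamma {\rm AGL}\big(\sigma^r(\frac{1}{\alpha+\gamma})\big) \cup {\rm AGL}(\sigma^r\alpha)$. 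Granting that, the action $\sigma^{ri}\cdot {\rm AGL}(\beta) = {\rm AGL}(\sigma^{ri}\beta)$ of $\langle\sigma^r\rangle$ on the $q+2$-element set is well-defined, and the prime-order counting argument finishes the proof.

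I expect the only real obstacle to be the elementary number-theoretic observation that $n \nmid (2^n+2)$ for primes $n>3$; once that is in hand, everything else is bookkeeping with the coset decomposition already established in Lemma \ref{partition} and the displayed computation. (It is worth double-checking small cases: for $n=5$, $q+2 = 34$, and $5 \nmid 34$; for $n=7$, $q+2=130$, and $7\nmid 130$ — consistent with $2^n+2\equiv 4\pmod n$.)
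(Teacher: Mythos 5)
Your approach is exactly the paper's: $\langle\sigma^r\rangle$ has prime order $n$, so its orbits on the finite set ${\rm AGL}\backslash{\rm PGL}(\alpha)$ all have size $1$ or $n$, and a fixed point exists as soon as $n$ does not divide the cardinality of that set. The one slip is the count: Lemma \ref{partition} exhibits the $q$ orbits ${\rm AGL}\big(\frac{1}{\alpha+\gamma}\big)$ for $\gamma\in\mathbb{F}_q$ together with the single additional orbit ${\rm AGL}(\alpha)$, for a total of $q+1=2^n+1$ elements (equivalently, $[{\rm PGL}:{\rm AGL}]=q+1$), not $q+2$. The corrected congruence is $2^n+1\equiv 3\pmod n$ by Fermat's little theorem, so $n\mid(q+1)$ would force $n\mid 3$, which is impossible for a prime $n>3$, and the argument closes exactly as you describe. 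The correct count also explains why the hypothesis is $n>3$ rather than $n\geq 3$: for $n=3$ one has $q+1=9$ and $3\mid 9$, so the counting argument genuinely fails there, whereas your figure $2^n+2\equiv 4\pmod n$ would misleadingly suggest the lemma holds for $n=3$ as well. Your explicit Fermat justification of $n\nmid(q+1)$, and your remark that the action on the orbit set must first be checked to be well defined using the hypothesis ${\rm PGL}(\sigma^r\alpha)={\rm PGL}(\alpha)$, are both welcome additions; the paper asserts these points without comment.
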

\begin{proof}
Note that the following properties hold:
(1) $|{\rm AGL}\verb|\|{\rm PGL}(\alpha)|=q+1$; (2) $n$ does not divide $q+1$;
(3) the size of each orbit of $\langle \sigma^r\rangle$ on ${\rm AGL}\verb|\|{\rm PGL}(\alpha)$ is either $1$ or $n$.
We get the required result.
\end{proof}

The following result is crucial to our enumeration.
\begin{lem}\label{nexttolastlem}
Let notation be the same as before. Let $f\in \mathcal{I}_r$.
Then
${\rm PGL}(\sigma^rf)={\rm PGL}(f)$
if and only if there is a polynomial $g(x)\in {\rm PGL}(f)$ such that
$g(x)$ divides $x^{2^r}+x.$
\end{lem}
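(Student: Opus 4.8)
The plan is to translate the statement \({\rm PGL}(\sigma^r f)={\rm PGL}(f)\) into a statement about roots using Lemma~\ref{equivalent}, and then to analyze what the orbit condition \({\rm PGL}(\sigma^r\alpha)={\rm PGL}(\alpha)\) says about the minimal polynomial of a cleverly chosen representative. By Lemma~\ref{equivalent}, if \(\alpha\) is a root of \(f\), then \({\rm PGL}(\sigma^r f)={\rm PGL}(f)\) is equivalent to \({\rm PGL}(\sigma^r\alpha)={\rm PGL}(\alpha)\). By Lemma~\ref{partition} and Lemma~\ref{fixpoint}, when this holds the cyclic group \(\langle\sigma^r\rangle\) has a fixed point among the \({\rm AGL}\)-orbits inside \({\rm PGL}(\alpha)\); that is, there is some \(\beta\in{\rm PGL}(\alpha)\) — either \(\beta=\alpha\) itself or \(\beta=\frac{1}{\alpha+\gamma}\) for some \(\gamma\in\mathbb{F}_q\) — with \({\rm AGL}(\sigma^r\beta)={\rm AGL}(\beta)\). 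Since \({\rm PGL}(\beta)={\rm PGL}(\alpha)\), I may as well work with \(\beta\) and its minimal polynomial \(g\), which lies in \({\rm PGL}(f)\) by the orbit description in part (1) of the proof of Lemma~\ref{orbit}.

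So the reduction is: it suffices to show that \({\rm AGL}(\sigma^r\beta)={\rm AGL}(\beta)\) forces the minimal polynomial \(g\) of \(\beta\) to divide \(x^{2^r}+x\), i.e.\ forces \(\beta\in\mathbb{F}_{2^r}\), and conversely. The converse direction is easy: if some \(g\in{\rm PGL}(f)\) divides \(x^{2^r}+x\), then any root \(\beta\) of \(g\) lies in \(\mathbb{F}_{2^r}\), so \(\sigma^r\beta=\beta^{2^r}=\beta\), whence trivially \({\rm PGL}(\sigma^r\beta)={\rm PGL}(\beta)\); since \({\rm PGL}(\beta)={\rm PGL}(\alpha)={\rm PGL}(f)\)'s root-orbit, Lemma~\ref{equivalent} gives \({\rm PGL}(\sigma^r f)={\rm PGL}(f)\). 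For the forward direction, suppose \(C(\sigma^r\beta)=\beta\) for some \(C=\begin{pmatrix}a&b\\0&1\end{pmatrix}\in{\rm AGL}\), i.e.\ \(a\,\beta^{2^r}+b=\beta\). The strategy is to iterate: applying \(\sigma^r\) and using the semi-linear multiplication rule, \((C\sigma^r)^k\beta=\beta\) for all \(k\), and the order of \(C\sigma^r\) in \({\rm P\Gamma L}\) is controlled — writing \((C\sigma^r)^n = C\,\sigma^r(C)\cdots\sigma^{r(n-1)}(C)\,\sigma^{rn}\) as in the proof of Lemma~\ref{equivalent}, the matrix part has order \(\ell\) dividing \(|{\rm AGL}|=q(q-1)\), so \((C\sigma^r)^{\ell n}=E_2\sigma^0\). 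Then \(\beta=(C\sigma^r)^{\ell n}\beta=\beta^{q^{\ell n}}\) shows \(r\mid \ell n\); but \(\ell\mid q(q-1)\mid q(q^2-1)\) and \(\gcd(r,q(q^2-1))=1\) force \(\gcd(r,\ell n)=1\) (using also \(\gcd(r,n)=1\)), a contradiction with \(r\mid\ell n\) unless... — actually this only bounds things, so more care is needed.

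The sharper argument I would run is to work directly inside the affine line. The relation \(a\beta^{2^r}+b=\beta\) says \(\beta\) is a root of the affine-linearized-type equation \(aX^{2^r}-X+b=0\); iterating \(\sigma^r\) yields \(\beta^{2^{2r}}=\sigma^r\big((\beta-b)/a\big)=(\beta^{2^r}-b^{2^r})/a^{2^r}\), and substituting repeatedly expresses \(\beta^{2^{kr}}\) as an \(\mathbb{F}_q\)-affine function of \(\beta\); after \(k\) with \(kr\equiv 0\pmod{?}\) one should be able to close the loop. The cleanest route: since \(\beta\) generates \(\mathbb{F}_{q^r}=\mathbb{F}_{2^{rn}}\) and \(\langle\sigma^r\rangle\) has order \(n\), iterate the relation \(n\) times to get \((C\sigma^r)^n\beta=\beta\) with \((C\sigma^r)^n=D\sigma^{0}\) where \(D=C\sigma^r(C)\cdots\sigma^{r(n-1)}(C)\in{\rm AGL}\); hence \(D\beta=\beta\), i.e.\ \(\beta\) is fixed by an element of \({\rm AGL}\subseteq{\rm PGL}\), so by the argument in Lemma~\ref{orbitsize1} (using \(r\ge3\) and that the fixed-point equation of a nontrivial \({\rm PGL}\)-element has at most \(2\) roots) \(D=E_2\), meaning \(C\sigma^r(C)\cdots\sigma^{r(n-1)}(C)=E_2\). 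Now \(C\sigma^r\) has order exactly \(n\) in \({\rm P\Gamma L}\) (its order divides \(n\) and is not \(1\) unless \(C\) fixes \(\beta\), forcing \(C=E_2\) and \(\sigma^r\beta=\beta\), which already gives \(\beta\in\mathbb{F}_{2^r}\)). The \(n\) images \(\beta,\sigma^r\beta,\dots,\sigma^{r(n-1)}\beta\) are then permuted transitively... and I want to conclude \(\sigma^r\beta=\beta\) after all — this is where I expect the main obstacle.

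I expect the crux to be pinning down exactly why \({\rm AGL}(\sigma^r\beta)={\rm AGL}(\beta)\) collapses all the way to \(\sigma^r\beta=\beta\) rather than merely to \(\beta\) lying in a small \({\rm AGL}\)-orbit of size dividing \(n\). The resolution I would pursue: if \(C\neq E_2\) and \(C\sigma^r\beta=\beta\), look at the \({\rm AGL}\)-stabilizer structure. A nontrivial \(C\in{\rm AGL}\) fixes at most one point of \(\mathbb{F}_q\cup\{\infty\}\) other than \(\infty\) when \(a\neq1\), and fixes only \(\infty\) when \(a=1,b\neq0\). The element \(D=C\sigma^r(C)\cdots\sigma^{r(n-1)}(C)\) being \(E_2\) (forced above) means \(C\) has ``norm one'' down the tower \(\mathbb{F}_{q^r}/\mathbb{F}_{?}\); by an additive/multiplicative Hilbert~90 applied to the two coordinates \(a\) (multiplicative) and \(b\) (additive twisted), one gets \(C=B^{-1}\sigma^r(B)\) for some \(B\in{\rm AGL}\), and then \(B\sigma^r(\cdot)B^{-1}\) conjugates \(C\sigma^r\) to \(\sigma^r\), so \(B\beta\) is fixed by \(\sigma^r\), i.e.\ \(B\beta\in\mathbb{F}_{2^r}\). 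Since \(B\in{\rm AGL}\subseteq{\rm PGL}\), the minimal polynomial \(g\) of \(B\beta\) lies in the \({\rm PGL}\)-orbit \({\rm PGL}(f)\)'s polynomial orbit (via Lemma~\ref{action}), and \(g\mid x^{2^r}+x\). That yields the desired \(g\in{\rm PGL}(f)\) dividing \(x^{2^r}+x\), completing the forward direction. I would double-check the Hilbert~90 step for \({\rm AGL}\) — it is the one place where \(n\) being prime and \(\gcd(n, q-1)\) considerations might intervene — and otherwise the rest is the bookkeeping already rehearsed in the proofs of Lemmas~\ref{equivalent} and~\ref{orbitsize1}.
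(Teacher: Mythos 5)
Your proposal is correct in substance and follows the same skeleton as the paper's proof: reduce to ${\rm PGL}(\sigma^r\alpha)={\rm PGL}(\alpha)$ via Lemma~\ref{equivalent}, invoke Lemma~\ref{fixpoint} to produce $\beta\in{\rm PGL}(\alpha)$ with ${\rm AGL}(\sigma^r\beta)={\rm AGL}(\beta)$, and then normalize that affine relation until some affine translate of $\beta$ is genuinely fixed by $\sigma^r$. Where you differ is in packaging the normalization as a Hilbert~90 statement for ${\rm AGL}$, whereas the paper does it by hand: it picks $\mu$ with $\mu^{2^r-1}\theta=1$ (possible since $\gcd(2^r-1,2^n-1)=2^{\gcd(r,n)}-1=1$) to reduce to $\beta^{2^r}=\beta+\xi$, derives ${\rm Tr}_{\mathbb{F}_q/\mathbb{F}_2}(\xi)=0$ by telescoping, and solves $\xi=\omega^{2^r}+\omega$. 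The step you flagged for double-checking does go through, and it is literally the same computation: writing $B=\begin{pmatrix}u & v\\ 0 & 1\end{pmatrix}$, one finds $B^{-1}\sigma^r(B)=\begin{pmatrix}u^{2^r-1} & u^{-1}(v^{2^r}+v)\\ 0 & 1\end{pmatrix}$, so $C=B^{-1}\sigma^r(B)$ amounts to solving $u^{2^r-1}=a$ (always possible, same gcd) and $v^{2^r}+v=ub$ (possible iff ${\rm Tr}_{\mathbb{F}_q/\mathbb{F}_2}(ub)=0$); meanwhile the $(1,1)$-entry of your $D=C\,\sigma^r(C)\cdots\sigma^{r(n-1)}(C)$ is $a^{2^n-1}=1$ automatically and its $(1,2)$-entry equals $u^{-1}{\rm Tr}_{\mathbb{F}_q/\mathbb{F}_2}(ub)$, so the condition $D=E_2$ --- which you correctly forced from $D\beta=\beta$ and $[\mathbb{F}_q(\beta):\mathbb{F}_q]=r\geq 3$ --- is exactly the required trace condition. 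Your worry about $\gcd(n,q-1)$ is a red herring; only $\gcd(r,n)=1$ enters, through the two facts just used. Two tidying remarks: the second paragraph (bounding the order of $C\sigma^r$ via $\ell$) is a dead end you yourself abandon and should simply be deleted, and the parenthetical about the order of $C\sigma^r$ is unnecessary since that order is exactly $n$ merely because its $\sigma$-component is nontrivial.
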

\begin{proof}
To complete the proof,
we mainly modify the arguments in \cite[Lemma 3.1 and Theorem 3.2]{ryan15}. Let $\alpha$ be a root of $f(x)$.
Lemma \ref{equivalent} says that ${\rm PGL}(\sigma^r f)={\rm PGL}(f)$
if and only if
${\rm PGL}(\sigma^r\alpha)={\rm PGL}(\alpha).$

Suppose that ${\rm PGL}(\sigma^r\alpha)={\rm PGL}(\alpha)$, then according to Lemma \ref{fixpoint}, we need to consider two cases:

(1) ${\rm AGL}(\sigma^r\alpha)={\rm AGL}(\alpha)$.
In this case there are $\theta\neq0, \tau\in \mathbb{F}_q$ such that $\alpha^{2^r}=\theta\alpha+\tau.$
Let $\rho$ be a primitive element of $\mathbb{F}_q$, i.e., $\mathbb{F}_q^*=\langle\rho\rangle$ and $\rho$ is of order $q-1=2^n-1$.
Since $(2^r-1, 2^n-1)=1$, $\mathbb{F}_q^*=\langle\rho^{2^r-1}\rangle$.
Assume that $\theta=\rho^{(2^r-1)\kappa}$, where $\kappa$ is a positive integer.
Then there exists an element $\mu=\rho^{-\kappa}\in \mathbb{F}_q$ satisfying $\mu^{2^r-1}\theta=1.$
It follows that $\mu\alpha+\upsilon\in {\rm AGL}(\alpha)$ for each $v\in \mathbb{F}_q$ and
\begin{eqnarray*}
(\mu\alpha+\upsilon)^{2^r}& = &  \mu^{2^r}\alpha^{2^r}+\upsilon^{2^r}\\
& = &  \mu^{2^r}(\theta\alpha+\tau)+v^{2^r}\\
&=& \mu^{2^r-1}\theta(\mu\alpha+\upsilon)+\mu^{2^r-1}\theta\upsilon+\mu^{2^r}\tau+\upsilon^{2^r}\\
&=& (\mu\alpha+\upsilon)+(\upsilon+\mu^{2^r}\tau+\upsilon^{2^r}).
\end{eqnarray*}
Write $\beta=\mu\alpha+\upsilon$ and $\xi=\upsilon+\mu^{2^r}\tau+\upsilon^{2^r}$.
Then $\beta\in {\rm AGL}(\alpha), ~\xi\in \mathbb{F}_q$ and
$\beta^{2^r}=\beta+\xi.$
This gives
$\xi=\beta^{2^r}+\beta,$
which yields
$$\xi+\xi^{2^r}+\xi^{2^{2r}}+\cdots+\xi^{2^{(n-1)r}}=0.$$
It is known that $0,1,2,\cdots,n-1$ is a complete set of residues modulo $n$. From $\gcd(r,n)=1$ we have that
$0,r,2r,\cdots,(n-1)r$ is also a complete set of residues modulo $n$. By $\xi^{2^n}=\xi$ we have
$$\xi+\xi^{2}+\xi^{2^{2}}+\cdots+\xi^{2^{(n-1)}}=0.$$
This is equivalent to saying that (see \cite[Definition 2.22]{lidl})
${\rm Tr}_{\mathbb{F}_{q}/\mathbb{F}_{2}}(\xi)=0.$
Hence there exists $\omega_0\in \mathbb{F}_{q}$ such that $\xi=\omega_0^{2}+\omega_0$ (see \cite[Theorem 2.25]{lidl}).
Therefore, there is an element $\omega\in \mathbb{F}_{q}$ such that $\xi=\omega^{2^r}+\omega$,
because one can easily see that (using the fact $\gcd(r,n)=1$)
$$\big\{\eta^2+\eta\,|\,\eta\in \mathbb{F}_q\big\}=\big\{\eta^{2^r}+\eta\,|\,\eta\in \mathbb{F}_q\big\}.$$
It follows that
$$\beta^{2^r}=\beta+\xi=\beta+\omega^{2^r}+\omega,$$
which yields
$$(\beta+\omega)^{2^r}+(\beta+\omega)=0;$$
that is to say, $\beta+\omega\in {\rm AGL}(\alpha)$ is a root of $x^{2^r}+x$.
Let $g(x)$ be the minimal polynomial of $\beta+\omega$ over $\mathbb{F}_q$.
Then there is $g(x)\in {\rm PGL}(f)$ such that
$g(x)$ divides $x^{2^r}+x.$

(2) ${\rm AGL}\big(\sigma^r\big(\frac{1}{\alpha+\gamma}\big)\big)$=${\rm AGL}\big(\frac{1}{\alpha+\gamma}\big)$.
Using arguments essentially the same as those in case (1), we conclude that there exists $g(x)\in {\rm PGL}(f)$ such that
$g(x)$ divides $x^{2^r}+x.$

Conversely,
suppose that there is a polynomial $g(x)\in {\rm PGL}(f)$ such that  $g(x)$ divides $x^{2^r}+x$.
Then $f=D(g)$, where $D\in {\rm PGL}$. We then have  ${\rm PGL}(f)={\rm PGL}(g).$
In addition, from $f=D(g)$ we   can get that
$\sigma^r(f)=\sigma^rD(g)=\sigma^r(D)\sigma^r(g).$
Therefore
${\rm PGL}(\sigma^rf)={\rm PGL}(\sigma^rg).$
Let $\zeta$ be  a root of $g(x)$. Since $g(x)$ divides $x^{2^r}+x$,  we obtain  $\zeta^{2^r}=\zeta$, i.e.,
$\sigma^r(\zeta)=\zeta$, which implies that
${\rm PGL}(\zeta)={\rm PGL}\big(\sigma^r(\zeta)\big).$
Using Lemma \ref{equivalent},  one has
${\rm PGL}(g)={\rm PGL}(\sigma^rg).$
It follows that ${\rm PGL}(\sigma^rf)={\rm PGL}(f).$
The proof is complete.
\end{proof}
Now we are ready to determine the number of orbits of $\langle \sigma^r\rangle$ on ${\rm PGL}\verb|\|\mathcal{I}_r$ with size $1$.
Since $r\geq3$, we have that
$f(x)\in \mathcal{I}_r$ divides $x^{2^r}+x$  if and only if   $f(x)\in \mathcal{I}_r$ divides $x^{2^r-1}-1$ .
Let ${\rm ord}(f)$ denote the order of the polynomial $f$ (see \cite[Definition 3.2]{lidl}).
It follows from \cite[Lemma 3.6]{lidl} that $f(x)$ divides $x^{2^r}+x$ if and only if ${\rm ord}(f)$ divides $2^r-1$.
The set ${\rm E}(r,q)$ is defined by
\begin{equation}\label{erq}
{\rm E}(r,q)=\Big\{e \,\Big|\,\hbox{$e>1$ is an integer dividing $2^r-1$~but~$e$ does not divide $q^{d}-1$~for any~$1\leq d<r$}\Big\}.
\end{equation}
Then according to \cite[Theorem 3.5]{lidl},  the number of polynomials  $f(x)\in \mathcal{I}_r$
such that $f(x)$ divides $x^{2^r}+x$ is equal to
$$
\Big{|}\Big\{f(x)\in \mathcal{I}_r\,\Big|\,\hbox{$f(x)$ divides $x^{2^r}+x$}\Big\}\Big|=
\sum_{e\in {\rm E}(r,q)}\frac{\phi(e)}{r},$$
where $\phi$ is the Euler's function.

In the following we provide another characterization about the number of polynomials  $f(x)\in \mathcal{I}_r$
such that $f(x)$ divides $x^{2^r}+x$.

\begin{lem}\label{firstnumber}
With the notation as above. Then
$$\Big{|}\Big\{f(x)\in \mathcal{I}_r\,\Big|\,\hbox{$f(x)$ divides $x^{2^r}+x$}\Big\}\Big|=
\frac{1}{r}\sum_{d|r}\big(2^{\frac{r}{d}}-1\big)\mu(d),$$
where $\mu$ is the M$\ddot{o}$bius function.
\end{lem}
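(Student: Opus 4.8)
The plan is to count the irreducible polynomials $f(x)\in\mathcal{I}_r$ dividing $x^{2^r}+x$ by identifying their roots with the elements of $\mathbb{F}_{2^r}$ that generate $\mathbb{F}_{q^r}$ over $\mathbb{F}_q$, and then apply inclusion–exclusion (M\"obius inversion) over the subfield lattice of $\mathbb{F}_{2^r}$. The key observation is that a monic irreducible $f(x)\in\mathbb{F}_q[x]$ of degree $r$ divides $x^{2^r}+x$ if and only if some (equivalently, every) root $\zeta$ of $f$ satisfies $\zeta^{2^r}=\zeta$, i.e. $\zeta\in\mathbb{F}_{2^r}$. Thus the roots of such polynomials are exactly the elements of $\mathbb{F}_{2^r}$ whose degree over $\mathbb{F}_q$ is precisely $r$; since each such $f$ has $r$ distinct roots and distinct irreducible polynomials have disjoint root sets, the quantity we want equals $\tfrac{1}{r}$ times the number of $\zeta\in\mathbb{F}_{2^r}$ with $[\mathbb{F}_q(\zeta):\mathbb{F}_q]=r$.

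Next I would pin down which $\zeta\in\mathbb{F}_{2^r}$ have degree exactly $r$ over $\mathbb{F}_q=\mathbb{F}_{2^n}$. For $\zeta\in\mathbb{F}_{2^r}$ we have $\mathbb{F}_q(\zeta)=\mathbb{F}_{2^{n\cdot\deg\zeta}}$ on the one hand, and $\mathbb{F}_{2}(\zeta)=\mathbb{F}_{2^{\deg_2\zeta}}$ with $\deg_2\zeta\mid r$ on the other; since $\gcd(r,n)=1$, the field $\mathbb{F}_q(\zeta)$ lies inside $\mathbb{F}_{q^r}=\mathbb{F}_{2^{nr}}$ and equals $\mathbb{F}_{2^{\,n\cdot\mathrm{lcm}(\deg_2\zeta,\,?)}}$—more cleanly, $[\mathbb{F}_q(\zeta):\mathbb{F}_q]=[\mathbb{F}_2(\zeta):\mathbb{F}_2]=\deg_2\zeta$ because $\gcd(n,r)=1$ forces the compositum degree to multiply. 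Hence $[\mathbb{F}_q(\zeta):\mathbb{F}_q]=r$ iff $\zeta$ has degree $r$ over $\mathbb{F}_2$, i.e. iff $\zeta\in\mathbb{F}_{2^r}$ but $\zeta\notin\mathbb{F}_{2^{r/p}}$ for every prime $p\mid r$. The number of such $\zeta$ is, by the standard M\"obius-inversion count of primitive (degree-$d$) elements of a finite field, exactly $\sum_{d\mid r}\mu(d)\,2^{r/d}$. Dividing by $r$ gives $\tfrac1r\sum_{d\mid r}\mu(d)\,2^{r/d}$.

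Finally I would reconcile the $2^{r/d}$ with the claimed $2^{r/d}-1$: note $\sum_{d\mid r}\mu(d)=0$ since $r\ge 3>1$, so $\sum_{d\mid r}\mu(d)\,2^{r/d}=\sum_{d\mid r}\mu(d)\big(2^{r/d}-1\big)$, which is the stated formula. (Equivalently, one may run the M\"obius inversion directly on $\#\{\zeta\in\mathbb{F}_{2^r}^\times:\ \deg_2\zeta=r\}$, using that $0$ has degree $1$ and $r\ge 3$ so $0$ is never counted, producing the $-1$ term naturally.) The main obstacle is the second paragraph: carefully justifying that $\gcd(r,n)=1$ makes the degree of $\zeta$ over $\mathbb{F}_q$ coincide with its degree over $\mathbb{F}_2$, so that "degree $r$ over $\mathbb{F}_q$" is the same condition as "lies in $\mathbb{F}_{2^r}$ and in no proper subfield"; once that is clear the enumeration is the classical one. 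An alternative, entirely parallel route is to show directly that the set $\mathrm{E}(r,q)$ of \eqref{erq} consists exactly of the integers $e>1$ with $\mathrm{ord}_e(2)=r$ (again using $\gcd(r,n)=1$ to replace $q^d-1$ conditions by $2^d-1$ conditions), and then recognize $\sum_{e\in\mathrm{E}(r,q)}\phi(e)=\sum_{d\mid r}\mu(d)(2^{r/d}-1)$ as the standard identity $\sum_{d\mid m}\phi(d)=m$ inverted; this shows the new formula agrees with the one already derived from \cite[Theorem 3.5]{lidl}.
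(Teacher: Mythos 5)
Your proposal is correct and is essentially the paper's own argument viewed on the level of roots rather than polynomials: the paper factors $x^{2^r-1}-1$ over $\mathbb{F}_q$ into irreducibles of degree dividing $r$ (using $\mathrm{ord}_{2^r-1}(q)=r$, i.e.\ $\gcd(r,n)=1$) and M\"obius-inverts the degree count $2^r-1=\sum_{d\mid r}d\,|\mathcal{S}_d|$, which is exactly your partition of $\mathbb{F}_{2^r}^{\times}$ by degree over $\mathbb{F}_q$. Your extra observation that $[\mathbb{F}_q(\zeta):\mathbb{F}_q]=[\mathbb{F}_2(\zeta):\mathbb{F}_2]$ for $\zeta\in\mathbb{F}_{2^r}$, and the use of $\sum_{d\mid r}\mu(d)=0$ to pass from $2^{r/d}$ to $2^{r/d}-1$, are both correct and equivalent to the paper's bookkeeping.
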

\begin{proof}
Let $d$ be a positive integer.
Suppose that
$$\Omega_d(x)=\prod_{f(x)\in \mathcal{S}_d(x)}f(x),$$
where
$$\mathcal{S}_d(x)=\Big\{f(x)\in \mathcal{I}_d\,\Big|\,\hbox{$f(x)$ divides $x^{2^r-1}-1$}\Big\}.$$
Note that $\gcd(2^r-1, q)=\gcd(2^r-1, 2^n)=1$ and $\gcd(r,n)=1$ and we get that
${\rm ord}_{2^r-1}(q)=r$, which shows that $d|r$. Then
$$x^{2^r-1}-1=\prod_{d|r}\Omega_d(x).$$
So
$$2^r-1=\sum_{d|r}d\big|\mathcal{S}_d(x)\big|.$$
In virtue of the M$\ddot{o}$bius inversion formula we obtain that
$$r\big|\mathcal{S}_r(x)\big|=\frac{1}{r}\sum_{d|r}\big(2^{\frac{r}{d}}-1\big)\mu(d).$$
Hence we have
$$\Big{|}\Big\{f(x)\in \mathcal{I}_r\,\Big|\,\hbox{$f(x)$ divides $x^{2^r}+x$}\Big\}\Big|=\big|\mathcal{S}_r(x)\big|=
\frac{1}{r}\sum_{d|r}\big(2^{\frac{r}{d}}-1\big)\mu(d),$$
where $\mu$ is the M$\ddot{o}$bius function.
\end{proof}

We have seen that  the number of monic irreducible polynomials of degree
$r$ over $\mathbb{F}_q$  that divide $x^{2^r}+x$ is equal to
$$\frac{1}{r}\sum_{e\in {\rm E}(r,q)}\phi(e) ~\mbox{or}~\frac{1}{r}\sum_{d|r}\big(2^{\frac{r}{d}}-1\big)\mu(d).$$


The following result reveals that if ${\rm PGL}(f)$ contains a polynomial that divides $x^{2^r}+x$, then
${\rm PGL}(f)$ contains exactly $6$ such polynomials.
\begin{lem}\label{lastlem}
Suppose that $f(x)\in \mathcal{I}_r$ such that $f(x)$ divides $x^{2^r}+x$. Then
$$\Big|\Big\{h(x)\,\Big|\,h(x)\in {\rm PGL}(f), ~\hbox{$h(x)$ divides $x^{2^r}+x$}\Big\}\Big|=6.$$
\end{lem}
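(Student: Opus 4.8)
The plan is to exploit the orbit decomposition of $\mathrm{PGL}(f)$ into $\mathrm{AGL}$-orbits provided by Lemma \ref{partition}, together with the explicit shape that a polynomial dividing $x^{2^r}+x$ imposes on its roots. Let $\alpha$ be a root of $f(x)$; since $f$ divides $x^{2^r}+x$ we have $\alpha^{2^r}=\alpha$, i.e. $\sigma^r(\alpha)=\alpha$. By Lemma \ref{partition}, $\mathrm{PGL}(\alpha)$ is the disjoint union of the $q+1$ $\mathrm{AGL}$-orbits $\mathrm{AGL}(\alpha)$ and $\mathrm{AGL}\big(\tfrac{1}{\alpha+\gamma}\big)$ for $\gamma\in\mathbb{F}_q$. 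A monic irreducible $h(x)\in\mathrm{PGL}(f)$ corresponds (via $\varphi$ of Lemma \ref{orbit}, restricted to a single $\mathrm{PGL}$-orbit) to the $\mathrm{PGL}$-orbit of one of its roots $\beta$, and $h$ divides $x^{2^r}+x$ precisely when $\sigma^r(\beta)=\beta$, i.e. $\beta\in\mathbb{F}_{2^r}$. So I need to count the $\mathrm{AGL}$-orbits $O$ inside $\mathrm{PGL}(\alpha)$ that contain an element of $\mathbb{F}_{2^r}\cap\mathcal{S}$; each such orbit gives exactly one monic irreducible $h$ (namely the minimal polynomial over $\mathbb{F}_q$ of any of its elements), because $\mathrm{AGL}(\beta)$ consists of roots of a single degree-$r$ irreducible (the $\mathrm{AGL}$-action is transitive on the roots once we know $\mathrm{Stab}_{\mathrm{PGL}}(h)=\{E_2\}$ by Lemma \ref{orbitsize1}, so an $\mathrm{AGL}$-orbit inside a $\mathrm{PGL}$-orbit that meets $\mathbb{F}_{2^r}$ is exactly the root set of one such $h$). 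Thus the count I want is the number of $\mathrm{AGL}$-orbits among the $q+1$ listed ones that contain a point of $\mathbb{F}_{2^r}$.

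The core computation is then combinatorial. First, $\alpha\in\mathbb{F}_{2^r}$, so $\mathrm{AGL}(\alpha)$ is one such orbit. Next I must determine for how many $\gamma\in\mathbb{F}_q$ the orbit $\mathrm{AGL}\big(\tfrac{1}{\alpha+\gamma}\big)$ contains an element of $\mathbb{F}_{2^r}$. Note $\mathrm{AGL}\big(\tfrac{1}{\alpha+\gamma}\big)$ contains an element of $\mathbb{F}_{2^r}$ iff $\tfrac{1}{\alpha+\gamma}$ is $\mathrm{AGL}$-equivalent to some element of $\mathbb{F}_{2^r}$, i.e. iff $\theta\cdot\tfrac{1}{\alpha+\gamma}+\delta\in\mathbb{F}_{2^r}$ for some $\theta\in\mathbb{F}_q^*,\delta\in\mathbb{F}_q$; since $\mathbb{F}_q^*\subseteq\mathbb{F}_{2^r}^*$? — no, that fails in general, so instead I argue: $\tfrac{1}{\alpha+\gamma}\in\mathbb{F}_{2^r}$ exactly when $(\alpha+\gamma)^{2^r}=\alpha+\gamma$, i.e. when $\gamma^{2^r}=\gamma$, i.e. $\gamma\in\mathbb{F}_{2^r}\cap\mathbb{F}_q=\mathbb{F}_{\gcd(r,n)}$; since $\gcd(r,n)=1$ this forces $\gamma\in\mathbb{F}_2=\{0,1\}$. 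More generally, applying $\sigma^r$ to $\theta\tfrac{1}{\alpha+\gamma}+\delta\in\mathbb{F}_{2^r}$ and using $\alpha^{2^r}=\alpha$, one finds $\gamma^{2^r}=\gamma$ again, so $\gamma\in\{0,1\}$. Hence precisely the values $\gamma=0$ and $\gamma=1$ can contribute, giving orbits $\mathrm{AGL}(1/\alpha)$ and $\mathrm{AGL}\big(1/(\alpha+1)\big)$.

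So far this yields three orbits: $\mathrm{AGL}(\alpha)$, $\mathrm{AGL}(1/\alpha)$, $\mathrm{AGL}(1/(\alpha+1))$. To reach the claimed total of $6$, I repeat the same analysis starting from a root of $1/(\alpha+\gamma)$-type elements, or more cleanly: I should count, within the $\mathrm{PGL}$-orbit, all the distinct $\mathrm{AGL}$-orbits meeting $\mathbb{F}_{2^r}$, and the subtlety is that two of the expressions above may coincide or may split further. The clean way is to observe that $\mathrm{PGL}(\alpha)$ meets $\mathbb{F}_{2^r}$, and for a point $\beta\in\mathbb{F}_{2^r}$ in this orbit, Lemma \ref{partition} applied with $\beta$ in place of $\alpha$ shows the $\mathrm{AGL}$-orbits meeting $\mathbb{F}_{2^r}$ are among $\mathrm{AGL}(\beta)$ and $\mathrm{AGL}(1/(\beta+\gamma))$ with $\gamma\in\{0,1\}$; but $1/(\beta+0),\,1/(\beta+1)$ need not be $\mathrm{AGL}$-equivalent to $\beta$, and applying the involution/translation structure one enumerates the $\mathrm{PGL}_2(\mathbb{F}_2)$-orbit of $\beta$ inside $\mathbb{F}_{2^r}$: the six maps $x\mapsto x,\ x+1,\ 1/x,\ 1/(x+1),\ (x+1)/x,\ x/(x+1)$ (the action of $\mathrm{PGL}_2(\mathbb{F}_2)\cong S_3$) are pairwise inequivalent under $\mathrm{AGL}$ over $\mathbb{F}_q$ because $\mathrm{Stab}_{\mathrm{PGL}}(f)=\{E_2\}$ forces all $q+1$ orbits in Lemma \ref{partition} to be distinct, and exactly these six land in $\mathbb{F}_{2^r}$ (one checks each of the six maps sends $\mathbb{F}_{2^r}\setminus\mathbb{F}_2$ into $\mathbb{F}_{2^r}$, while no further $\mathrm{AGL}$-translate does, by the $\gamma\in\{0,1\}$ argument above). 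Hence there are exactly $6$ such $\mathrm{AGL}$-orbits, hence exactly $6$ monic irreducible $h(x)\in\mathrm{PGL}(f)$ dividing $x^{2^r}+x$.

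\textbf{Main obstacle.} The delicate point is the lower bound — showing the six candidate $\mathrm{AGL}$-orbits are genuinely \emph{distinct} (not fewer), and that nothing is missed. Distinctness follows from Lemma \ref{orbitsize1}: since $\mathrm{Stab}_{\mathrm{PGL}}(f)=\{E_2\}$, the $q+1$ $\mathrm{AGL}$-orbits in Lemma \ref{partition} are all distinct, so the six we single out are distinct; one must also confirm $\alpha\notin\mathbb{F}_2$ (true since $\deg f=r\geq 3$) so that the $S_3$-action is free and the orbit has full size $6$, and confirm the six values $\alpha,\alpha+1,1/\alpha,1/(\alpha+1),(\alpha+1)/\alpha,\alpha/(\alpha+1)$ all lie in $\mathbb{F}_{2^r}$ (immediate from $\alpha^{2^r}=\alpha$ and closure of $\mathbb{F}_{2^r}$ under field operations). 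The upper bound — that no other $\mathrm{AGL}$-orbit in $\mathrm{PGL}(\alpha)$ meets $\mathbb{F}_{2^r}$ — is exactly the $\gamma^{2^r}=\gamma\Rightarrow\gamma\in\mathbb{F}_2$ step, which I expect to be routine once phrased via $\sigma^r$-invariance.
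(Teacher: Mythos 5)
Your reduction of the count to $\mathrm{AGL}$-orbits contains two genuine errors that happen to cancel numerically, so the final answer is right but the argument is not. First, the claim that each $\mathrm{AGL}$-orbit meeting $\mathbb{F}_{2^r}$ ``gives exactly one monic irreducible $h$'' because ``$\mathrm{AGL}(\beta)$ consists of roots of a single degree-$r$ irreducible'' is false: $|\mathrm{AGL}(\beta)|=q(q-1)\gg r$, and the roots of the minimal polynomial of $\beta$ over $\mathbb{F}_q$ are the Frobenius conjugates $\beta^{q^j}$, which do not form an $\mathrm{AGL}$-orbit (e.g.\ $\beta$ and $\beta+1$ lie in the same $\mathrm{AGL}$-orbit but have different minimal polynomials, precisely because the stabilizer is trivial). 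The correct correspondence is element-by-element, not orbit-by-orbit: since $\mathrm{Stab}_{\mathrm{PGL}}(f)=\{E_2\}$, both $A\mapsto Af$ and $A\mapsto A\alpha$ are injective, and $Af$ divides $x^{2^r}+x$ iff $A\alpha\in\mathbb{F}_{2^r}$; hence the quantity to compute is $\big|\mathrm{PGL}(\alpha)\cap\mathbb{F}_{2^r}\big|$, and an $\mathrm{AGL}$-orbit contributes as many polynomials as it has elements in $\mathbb{F}_{2^r}$. Your own $\sigma^r$-computation already shows that $\theta\beta+\delta\in\mathbb{F}_{2^r}$ (with $\beta\in\mathbb{F}_{2^r}\cap\mathcal{S}$, $\theta\in\mathbb{F}_q^*$, $\delta\in\mathbb{F}_q$) forces $\theta=1$ and $\delta\in\mathbb{F}_2$, so each such orbit contributes exactly two polynomials, not one.

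Second, the six elements $\alpha,\ \alpha+1,\ 1/\alpha,\ 1/(\alpha+1),\ (\alpha+1)/\alpha,\ \alpha/(\alpha+1)$ do \emph{not} lie in six distinct $\mathrm{AGL}$-orbits: $x\mapsto x+1$ is itself an $\mathrm{AGL}$-map, so $\alpha+1\in\mathrm{AGL}(\alpha)$, and in characteristic $2$ one has $(\alpha+1)/\alpha=1+1/\alpha\in\mathrm{AGL}(1/\alpha)$ and $\alpha/(\alpha+1)=1+1/(\alpha+1)\in\mathrm{AGL}\big(1/(\alpha+1)\big)$. The distinctness of the $q+1$ orbits in Lemma \ref{partition} is beside the point, because these six values occupy only the three orbits indexed by $t_0$ and by $\gamma\in\{0,1\}$. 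So the true accounting is $3\times 2=6$, not $6\times 1$; to repair your proof you would replace ``one $h$ per orbit'' by ``two $h$ per orbit'' and ``six orbits'' by ``three orbits''. For comparison, the paper sidesteps orbits entirely: it counts matrices directly, showing that $(A\alpha)^{2^r}=A\alpha$ is equivalent to a system of $\mathbb{F}_q$-equations in $a,b,c,d$ (using $\deg_{\mathbb{F}_q}\alpha=r\geq 3$ and $\gcd(2^r-1,q-1)=1$) whose only solutions in $\mathrm{PGL}$ are the six matrices of $\mathrm{PGL}_2(\mathbb{F}_2)$.
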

\begin{proof}
For simplifying notations, let  $\Delta=\big\{h(x)\,\big|\,h(x)\in {\rm PGL}(f), ~\hbox{$h(x)$ divides $x^{2^r}+x$}\big\}$.
Let $\alpha$ be   a root of $f(x)$, which gives $\alpha^{2^r}=\alpha$ since $f(x)$ divides $x^{2^r}+x$.
Observe that
\begin{equation*}
\begin{split}
|\Delta|=&\Big|\Big\{h(x)\,\Big|\,h(x)\in {\rm PGL}(f), ~\hbox{$h(x)$ divides $x^{2^r}+x$}\Big\}\Big|\\
=&\Big|\Big\{Af(x)\,\Big|\,   \hbox{$Af(x)$ divides $x^{2^r}+x$}\Big\}\Big|\\
=&\Big|\Big\{A\in {\rm PGL}\,\Big|\,   \hbox{$Af(x)$ divides $x^{2^r}+x$}\Big\}\Big|\\
=&\Big|\Big\{A\in {\rm PGL}\,\Big|\,   (A\alpha)^{2^r}+A\alpha=0\Big\}\Big|.\\
\end{split}
\end{equation*}
Assume that $A=\begin{pmatrix} a & b \\c & d\end{pmatrix}.$
Then
\begin{eqnarray*}
(A\alpha)^{2^r}+A\alpha=0
&\Leftrightarrow & \Big(\frac{a\alpha+b}{c\alpha+d}\Big)^{2^r}+\Big(\frac{a\alpha+b}{c\alpha+d}\Big)=0\\
&\Leftrightarrow & \frac{a^{2^r}\alpha^{2^r}+b^{2^r}}{c^{2^r}\alpha^{2^r}+d^{2^r}}+\frac{a\alpha+b}{c\alpha+d}=0\\
&\Leftrightarrow & \frac{a^{2^r}\alpha+b^{2^r}}{c^{2^r}\alpha+d^{2^r}}+\frac{a\alpha+b}{c\alpha+d}=0\\
&\Leftrightarrow & (ca^{2^r}+ac^{2^r})\alpha^2+(da^{2^r}+bc^{2^r}+ad^{2^r}+cb^{2^r})\alpha+(bd^{2^r}+db^{2^r})=0\\
&\Leftrightarrow &
\begin{cases}
ca^{2^r}+ac^{2^r}=0,\\
da^{2^r}+bc^{2^r}+ad^{2^r}+cb^{2^r}=0,\\
bd^{2^r}+db^{2^r}=0.
\end{cases}
\end{eqnarray*}
Case 1: $a\neq 0, ~c=0$.
Since $A$ is invertible, $d\neq0$. From the second equality we have $a=d$.
If $b\neq 0$, then $b=d$. Hence, there are   two cases:
$$b=c=0,~a=d\neq0;~~c=0, ~a=b=d\neq0.$$
Therefore in this case
$$A=\begin{pmatrix} 1 & 0 \\0 & 1\end{pmatrix}~\mbox{or}~
\begin{pmatrix} 1 & 1 \\0 & 1\end{pmatrix}.$$
Case 2: $c\neq 0, ~a=0$.
Since $A$ is invertible, $b\neq0$ and $ c\neq 0$. By the second equality we obtain   $b=c$.
If $d\neq 0$, then $b=d$. Hence, there are   two cases:
$$b=c\neq 0,~a=d=0;~~a=0, ~b=c=d\neq0.$$
Therefore
$$A=\begin{pmatrix} 0 & 1 \\1 & 0\end{pmatrix}~\mbox{or}~
\begin{pmatrix} 0 & 1 \\1 & 1\end{pmatrix}.$$
Case 3: $c\neq 0, ~a\neq0$.
From the first equality we get $a=c$. We consider three subcases separately.

Subcase 3.1: $b=0, ~d\neq 0$. From the second equality we get $a=d$.

Subcase 3.2: $b\neq0, ~d=0$. From the second equality we get $b=c$.

Subcase 3.3: $b\neq0, ~d\neq0$. From the last equality we get $b=d$. However, the determinate of $A$ is  $ad-bc=0$. This is impossible.

Hence, there are  two cases:
$$b=0,~a=c=d\neq 0;~~d=0,~a=b=c\neq0.$$
Therefore in this case
$A=\begin{pmatrix} 1 & 0 \\1 & 1\end{pmatrix}~\mbox{or}~
\begin{pmatrix} 1 & 1 \\1 & 0\end{pmatrix}.$

In conclusion, we have
$\big|\Delta\big|=6.$
\end{proof}

\subsection{Our main result and two illustrative  examples}
Collecting all the results that we have established, we arrive at the following result, which gives an upper bound for the
number of inequivalent  extended irreducible binary Goppa codes of length $2^n+1$ and degree $r$.
\begin{Theorem}\label{theorem}
We assume that  $n>3$ is an odd prime number and
$q=2^n$; let $r\geq3$ be a positive integer satisfying $\gcd(r,n)=1$ and $\gcd\big(r,q(q^2-1)\big)=1$.
The number of inequivalent extended irreducible binary Goppa codes of length $q+1$ and degree $r$ is at most
$$\frac{n-1}{6rn}\sum_{d|r}\mu(d)\big(2^{\frac{r}{d}}-1\big)+\frac{1}{rnq(q^2-1)}\sum_{d\,|\,r}\mu(d)q^{\frac{r}{d}},
$$
where $\mu$ is the M\"{o}bius function.
\end{Theorem}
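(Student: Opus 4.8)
The plan is to assemble the pieces already in place so as to count the number of orbits of ${\rm P\Gamma L}$ on $\mathcal{I}_r$, which by Lemmas \ref{important} and \ref{orbit} is an upper bound for the number of inequivalent extended irreducible binary Goppa codes of length $q+1$ and degree $r$. By Lemma \ref{book} applied to the normal subgroup ${\rm PGL}\trianglelefteq{\rm P\Gamma L}$, this number equals the number of orbits of ${\rm Gal}={\rm P\Gamma L}/{\rm PGL}$ on ${\rm PGL}\backslash\mathcal{I}_r$. Using the decomposition ${\rm Gal}=\langle\sigma^r\rangle\times\langle\sigma^n\rangle$ and Lemma \ref{book} once more, together with the observation that $\langle\sigma^n\rangle$ fixes every point of ${\rm PGL}\backslash\mathcal{I}_r$ (since $\sigma^n$ acts trivially on $\mathbb{F}_q$), the count reduces to the number of orbits of $\langle\sigma^r\rangle$, a cyclic group of prime order $n$, on the set ${\rm PGL}\backslash\mathcal{I}_r$.

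First I would invoke the standard orbit-counting fact for a cyclic group of prime order $n$ acting on a finite set $Y$: if $Y$ has $|Y|$ elements and $F$ of them are fixed points, then the number of orbits is $F+\frac{|Y|-F}{n}=\frac{(n-1)F+|Y|}{n}$. Here $Y={\rm PGL}\backslash\mathcal{I}_r$, so $|Y|=\big|{\rm PGL}\backslash\mathcal{I}_r\big|=\frac{1}{rq(q^2-1)}\sum_{d|r}\mu(d)q^{r/d}$ by \eqref{ir}. It remains to compute $F$, the number of ${\rm PGL}$-orbits ${\rm PGL}(f)$ with ${\rm PGL}(\sigma^r f)={\rm PGL}(f)$. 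By Lemma \ref{nexttolastlem}, such orbits are precisely those containing a polynomial dividing $x^{2^r}+x$; by Lemma \ref{firstnumber} the total number of $g(x)\in\mathcal{I}_r$ dividing $x^{2^r}+x$ is $\frac{1}{r}\sum_{d|r}\mu(d)(2^{r/d}-1)$; and by Lemma \ref{lastlem} each such orbit contains exactly $6$ of these polynomials. Hence $F=\frac{1}{6r}\sum_{d|r}\mu(d)(2^{r/d}-1)$.

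Substituting into $\frac{(n-1)F+|Y|}{n}$ gives the number of orbits of $\langle\sigma^r\rangle$ on ${\rm PGL}\backslash\mathcal{I}_r$, namely
$$\frac{n-1}{6rn}\sum_{d|r}\mu(d)\big(2^{r/d}-1\big)+\frac{1}{rnq(q^2-1)}\sum_{d|r}\mu(d)q^{r/d},$$
which by the chain of reductions above equals the number of orbits of ${\rm P\Gamma L}$ on $\mathcal{I}_r$, hence on $\mathcal{S}$, hence is an upper bound for the number of inequivalent codes. This completes the proof.

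The routine parts are the arithmetic bookkeeping and the verification that $F$ is genuinely an integer (so that $\frac{|Y|-F}{n}$ makes sense as an orbit count; note $|Y|-F$ need not be divisible by $n$, but the orbit-count formula $F+\frac{|Y|-F}{n}$ is still valid because orbits of non-fixed points have size exactly $n$, so $|Y|-F$ automatically is a multiple of $n$). The only genuine subtlety — and the step most worth stating carefully — is ensuring that the three enumerative lemmas fit together correctly: Lemma \ref{nexttolastlem} identifies the fixed orbits with those meeting $\{g\in\mathcal{I}_r: g\mid x^{2^r}+x\}$, and Lemma \ref{lastlem} gives the uniform multiplicity $6$, so that $F$ equals the size of that set divided by $6$; one should double-check that Lemma \ref{lastlem}'s hypothesis (that ${\rm PGL}(f)$ already contains such a divisor) is met for exactly the orbits being counted, which it is by construction. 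Everything else is direct substitution.
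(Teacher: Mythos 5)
Your proposal is correct and follows essentially the same route as the paper: the same chain of reductions via Lemmas \ref{important}, \ref{orbit}, and \ref{book} (with the trivial action of $\langle\sigma^n\rangle$), followed by the prime-order orbit count $s=\frac{(n-1)F+|Y|}{n}$ with $F$ computed from Lemmas \ref{nexttolastlem}, \ref{firstnumber} and \ref{lastlem}; the paper merely phrases this as the equation $F+n(s-F)=\big|{\rm PGL}\verb|\|\mathcal{I}_r\big|$ and solves for $s$.
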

\begin{proof}
By Lemma \ref{important}, the number of inequivalent  extended irreducible binary Goppa codes of length $q+1$
and degree $r$ is less than or equal to the number of orbits of ${\rm P\Gamma L}$ on $\mathcal{S}$.
By Lemma \ref{orbit},
let $s$ be the number of orbits of ${\rm P\Gamma L}$ on $\mathcal{I}_r$. Using Lemmas \ref{nexttolastlem} and \ref{lastlem},
we have
$$\frac{1}{6r}\sum_{d|r}\mu(d)\big(2^{\frac{r}{d}}-1\big)+n\Big(s-\frac{1}{6r}\sum_{d|r}\mu(d)\big(2^{\frac{r}{d}}-1\big)\Big)
=\big|{\rm PGL}\verb|\|\mathcal{I}_r\big|=\frac{|\mathcal{I}_r|}{q(q^2-1)},$$
from which we obtain
$$s=\frac{n-1}{6rn}\sum_{d|r}\mu(d)\big(2^{\frac{r}{d}}-1\big)+\frac{|\mathcal{I}_r|}{nq(q^2-1)}.$$
Substituting  $|\mathcal{I}_r|=\frac{1}{r}\sum\limits_{d\,|\,r}\mu(d)q^{r/d}$ into the above equation, we obtain the desired result.
We are done.
\end{proof}

We give two small examples to illustrate Theorem \ref{theorem}.
\begin{Example}{\rm
Take $n=5$ and $r=7$ in Theorem \ref{theorem}. This gives $q=2^n=2^5=32$ and thus $q(q^2-1)=32(32^2-1)=31\cdot 32\cdot 33=32736$.
It is readily seen that
$\gcd(r,n)=\gcd(7,5)=1$ and $\gcd(r,q(q^2-1))=\gcd(7,32(32^2-1))=1$, namely, the conditions listed in Theorem \ref{theorem} are satisfied.
Then Theorem \ref{theorem} says that the number of inequivalent extended irreducible binary Goppa codes of length $33$ and degree $7$ is at most
\begin{eqnarray*}
&&\frac{n-1}{6rn}\sum_{d|r}\mu(d)\big(2^{\frac{r}{d}}-1\big)+\frac{1}{rnq(q^2-1)}\sum_{d\,|\,r}\mu(d)q^{\frac{r}{d}}\\
&=&\frac{4}{6\cdot7\cdot5}\sum_{d|7}\mu(d)\big(2^{\frac{7}{d}}-1\big)+\frac{1}{7\cdot5\cdot32(32^2-1)}\sum_{d\,|\,7}\mu(d)32^{\frac{7}{d}}\\
&=&\frac{12}{5}+\frac{1}{7\cdot5\cdot32(32^2-1)}(32^7-32)\\
&=&29991.
\end{eqnarray*}
}
\end{Example}

\begin{Example}{\rm
Take $n=7$ and $q=2^n=2^7=128$ and thus $q(q^2-1)=128(128^2-1)=2^7\cdot 3\cdot 43\cdot 127$.
Assume that
$\gcd(r,n)=\gcd(r,7)=1$ and $\gcd(r,q(q^2-1))=\gcd(r,2^7\cdot 3\cdot 43\cdot 127)=1$.
Then according to Theorem \ref{theorem}
the number of inequivalent extended irreducible binary Goppa codes of length $129$ and degree $r$ is at most
\begin{eqnarray*}
&&\frac{n-1}{6rn}\sum_{d|r}\mu(d)\big(2^{\frac{r}{d}}-1\big)+\frac{1}{rnq(q^2-1)}\sum_{d\,|\,r}\mu(d)q^{\frac{r}{d}}\\
&=&\frac{1}{7r}\sum_{d|r}\mu(d)\big(2^{\frac{r}{d}}-1\big)+\frac{1}{r\cdot 7\cdot2^7\cdot 3\cdot 43\cdot 127}\sum_{d\,|\,r}\mu(d)128^{\frac{r}{d}}.
\end{eqnarray*}
In the following we provide the values of the upper bounds on the number of inequivalent extended irreducible binary Goppa codes of length $129$ and some possible degrees $r$, which are listed in Table I.
\begin{table}\centering
\caption{Upper bounds on the number of inequivalent extended irreducible binary Goppa codes of length $129$ and degree $r$}
\begin{tabular}{c|l}  
\hline
\textbf{Degree}  & \textbf{Upper bound}\\
\hline
$r=~5$ & $469$ \\
\hline
$r=11$ & $935870030557051$\\
\hline
$r=13$ & $12974326183623782445$\\
\hline
$r=17$ & $2663294067654074513871726265$\\
\hline
$r=19$ & $39042208951344950852613887707059$\\
\hline
\end{tabular}
\end{table}

}
\end{Example}

\section{Concluding remarks and future work}
It is known that the number of inequivalent extended irreducible binary Goppa codes of length $2^n+1$
and degree $r$ is less than or equal to the number of orbits of ${\rm P\Gamma L}$ on $\mathcal{S}$.
In this paper, we present a new approach to
get an upper bound for the number of inequivalent extended irreducible binary Goppa codes by introducing a group action of
${\rm P\Gamma L}$ on $\mathcal{I}_r$, the set of all monic irreducible polynomials of degree $r$ over $\mathbb{F}_q$.
We show that the number of orbits of ${\rm P\Gamma L}$ on $\mathcal{S}$ is equal to the number of orbits of ${\rm P\Gamma L}$ on $\mathcal{I}_r$.
There are   some  advantages of considering the action of ${\rm P\Gamma L}$ on $\mathcal{I}_r$ which permits us to find a formula for the number of orbits of ${\rm P\Gamma L}$ on $\mathcal{S}$. Therefore, we obtain an upper bound for the number of inequivalent extended irreducible binary Goppa codes of length $2^n+1$ and degree $r$, where $n>3$ and $r$ satisfy certain conditions.

A possible
direction for future  work is to find tight   upper bounds for the number of inequivalent
extended irreducible binary Goppa codes in more cases.   It also would be interesting to find the exact value of extended irreducible binary Goppa codes.

\vspace{0.3 cm}
\noindent{\bf Acknowledgements}\quad
We are very grateful to
Professor Qin Yue in Nanjing University of Aeronautics and Astronautics for
introducing us to the topic of enumeration of Goppa codes, and sending their manuscript \cite{yueqin} to us.

\end{document}